\renewcommand{\algorithmiccomment}[1]{\bgroup\hfill{$\triangleright$~#1}\egroup}
\DeclareMathOperator{\Tr}{Tr}
\DeclareMathOperator{\Cov}{Cov}
\newcolumntype{L}[1]{>{\raggedright\arraybackslash}p{#1}}
\newcolumntype{C}[1]{>{\centering\arraybackslash}p{#1}}
\newcolumntype{R}[1]{>{\raggedleft\arraybackslash}p{#1}}
\theoremstyle{plain} 
\newtheorem{proposition}{Proposition}
\newtheorem{definition}{Definition}
\newtheorem{theorem}{Theorem}
\newtheorem{lemma}{Lemma}
\newtheorem{assumption}{Assumption}
\def\defn{\,\coloneqq\,}
\def\argmin{\mathop{\mathsf{arg\,min}}} 
\def\lim{\mathop{\mathsf{lim}}} 
\def\min{\mathop{\mathsf{min}}}
\def\max{\mathop{\mathsf{max}}}
\def\zer{\mathsf{zer}}
\def\Dsf{\mathsf{\, d}}
\def\diag{\mathsf{diag}}
\def\ebm{{\bm{e}}}
\def\hbm{{\bm{h}}}
\def\xbm{{\bm{x}}}
\def\ybm{{\bm{y}}}
\def\Abm{{\bm{A}}}
\def\Dbm{{\bm{D}}}
\def\xbmast{{\bm{x}^\ast}}
\def\xbmtilde{{\widetilde{\bm{x}}}}
\def\Tsf{{\mathsf{T}}}
\def\Dsf{{\mathsf{D}}}
\def\Rsf{{\mathsf{R}}}
\def\Hsf{{\mathsf{H}}}
\def\Gsf{{\mathsf{G}}}
\def\Gsfhat{\widehat{\mathsf{G}}}
\def\Isf{{\mathsf{I}}}
\def\Hsf{{\mathsf{R}}}
\def\Usf{{\mathsf{U}}}
\def\Hsf{{\mathsf{H}}}
\def\R{\mathbb{R}}
\def\E{\mathbb{E}}
\def\N{\mathbb{N}}
\def\Z{\mathbb{Z}}
\def\Fcal{{\mathcal{F}}}
\def\Xcal{{\mathcal{X}}}
\def\proposed{$\textsc{Async-RED}$}
\begin{document}

\title{Async-RED: A Provably Convergent Asynchronous Block Parallel Stochastic Method using Deep Denoising Priors}

{\normalsize\author{Yu~Sun$^{\footnotesize 1}$, Jiaming~Liu$^{\footnotesize 3}$, Yiran~Sun$^{\footnotesize 3}$, Brendt~Wohlberg$^{\footnotesize 2}$,~and~Ulugbek~S.~Kamilov$^{\footnotesize 1, 3, \ast}$\\
\emph{\small $^{\footnotesize 1}$Department of Computer Science and Engineering,~Washington University in St.~Louis, MO 63130, USA}\\
\emph{\small $^{\footnotesize 2}$Los Alamos National Laboratory, Theoretical Division, Los Alamos, NM 87545 USA}\\
\emph{\small $^{\footnotesize 3}$Department of Electrical and Systems Engineering,~Washington University in St.~Louis, MO 63130, USA}\\
\small$^{\footnotesize *}$\emph{Email}: \texttt{kamilov@wustl.edu}
}}

\markboth{An Online Plug-and-Play Algorithm for Regularized Image Reconstruction}%
{Kamilov:  An Online Plug-and-Play Algorithm for Regularized Image Reconstruction}

\date{}
\maketitle 

\begin{abstract}
Regularization by denoising (RED) is a recently developed framework for solving inverse problems by integrating advanced denoisers as image priors.
Recent work has shown its state-of-the-art performance when combined with pre-trained \emph{deep denoisers}. However, current RED algorithms are inadequate for parallel processing on multicore systems. We address this issue by proposing a new \emph{asynchronous RED} (\proposed) algorithm that enables asynchronous parallel processing of data, making it significantly faster than its serial counterparts for large-scale inverse problems.
The computational complexity of \proposed~is further reduced by using a random subset of measurements at every iteration.
We present complete theoretical analysis of the algorithm by establishing its convergence under explicit assumptions on the data-fidelity and the denoiser. We validate \proposed~on image recovery using pre-trained deep denoisers as priors.
\end{abstract}

\section{Introduction}
\label{Sec:Intro}
Imaging inverse problems seek to recover an unknown image $\xbm\in\R^n$ from its noisy measurements $\ybm\in\R^m$.
Such problems arise in many fields, ranging from low-level computer vision to biomedical imaging.
Since many imaging inverse problems are ill-posed, it is common to regularize the solution by using prior information on the unknown image.
Widely-adopted image priors include total variation, low-rank penalties, and transform-domain sparsity~\cite{Rudin.etal1992, Figueiredo.Nowak2001, Figueiredo.Nowak2003, Hu.etal2012, Elad.Aharon2006}.

There has been considerable recent interest in \emph{plug-and-play priors (PnP)}~\cite{Venkatakrishnan.etal2013, Sreehari.etal2016} and \emph{regularization by denoising (RED)}~\cite{Romano.etal2017}, as frameworks for exploiting image denoisers as priors for image recovery. The popularity of deep learning has led to a wide adoption of \emph{deep denoisers} within PnP/RED, leading to their state-of-the-art performance in a variety of applications, including image restoration~\cite{Mataev.etal2019}, phase retrieval~\cite{Metzler.etal2018}, and tomographic imaging~\cite{Wu.etal2020}.
Their empirical success has also prompted a follow-up theoretical work clarifying the existence of explicit regularizers~\cite{Reehorst.Schniter2019}, providing new interpretations based on fixed-point projections~\cite{Cohen.etal2020}, and analyzing their coordinate/online variants~\cite{Sun.etal2019c,Wu.etal2020}.
Nonetheless, current PnP/RED algorithms are inherently \emph{serial}, which makes them suboptimal for large-scale inverse problems on multicore systems~(see Fig.~\ref{Fig:Scheme} for an illustration).

\begin{figure}[h]
\centering\includegraphics[width=0.99\linewidth]{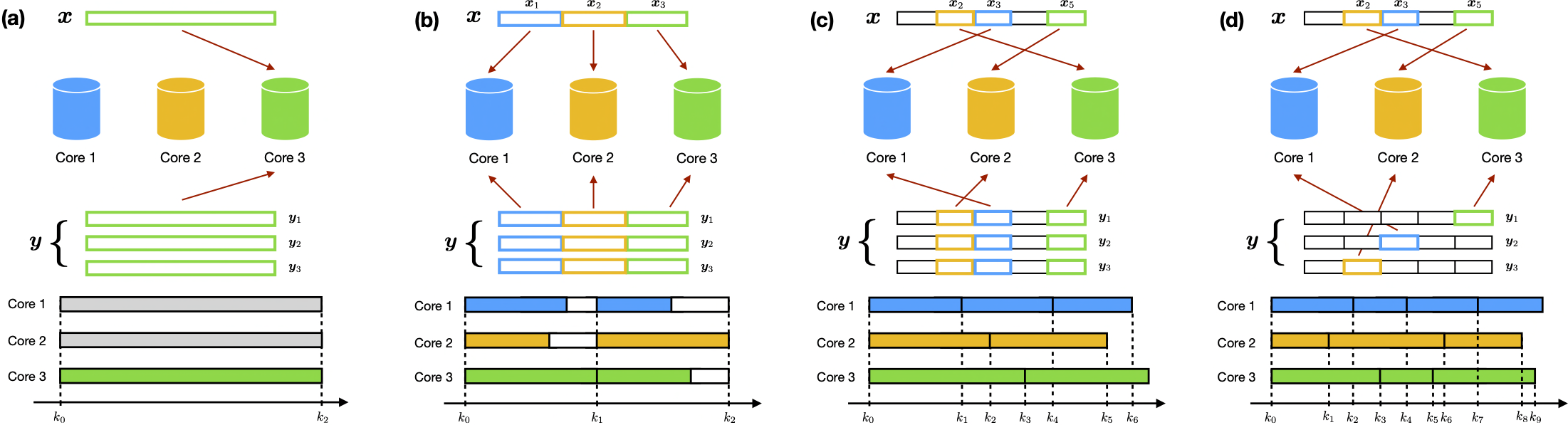}
\caption{Visual illustration of \emph{serial} and \emph{parallel} image recovery on a multicore system.
\textbf{(a)} Serial processing uses only one core of the system for every iteration. \textbf{(b)} Synchronous parallel processing has to wait for the slowest core to finish before starting the next iteration. \textbf{(c)} Asynchronous parallel processing can continuously iterate using all the cores without waiting. \textbf{(d)} Asynchronous parallel processing using the stochastic gradient leads to additional flexibility. \textbf{(a)}, \textbf{(b)}, and \textbf{(c)} use all the corresponding measurements at every iteration, while \textbf{(d)} uses only a small random subset at a time. \proposed~adopts the schemes shown in \textbf{(c)} and \textbf{(d)}.}
\label{Fig:Scheme}
\end{figure}

We address this gap by proposing a novel \emph{asynchronous RED} (\proposed) algorithm.
The algorithm decomposes the inference problem into a sequence of partial (block-coordinate) updates on $\xbm$ executed \emph{asynchronously} in parallel over a multicore system. \proposed~leads to a more efficient usage of available cores by avoiding synchronization of partial updates. \proposed~is also scalable in terms of the number of measurements, since it processes only a small random subset of $\ybm$ at every iteration.
We present two new theoretical results on the convergence of \proposed~based on a unified set of explicit assumptions on the data-fidelity and the denoiser.
Specifically, we establish its fixed-point convergence in the \emph{batch} setting and extend this analysis to the randomized \emph{minibatch} scenario.
Our results extend recent work on \emph{serial} block-coordinate RED~\cite{Sun.etal2019c} and are fully consistent with the traditional asynchronous parallel optimization methods~\cite{Lian.etal2015, Sun.etal2017}. We numerically validate \proposed~on image recovery from linear and noisy measurements using pre-trained deep denoisers as image priors.

\section{Background}
\label{Sec:Background}

\textbf{Inverse problems.}~Inverse problems are traditionally formulated as a composite optimization  problem
\begin{equation}
\label{Eq:Optimization}
\widehat{\xbm} = \argmin_{\xbm \in \R^{n}} g(\xbm) + h(\xbm),
\end{equation}
where $g$ is the data-fidelity term that ensures consistency of $\xbm$ with the measured data $\ybm$ and $h$ is the regularizer that infuses the prior knowledge on $\xbm$. For example, consider the smooth $\ell_2$-norm data-fidelity term $g(\xbm) = \|\ybm-\Abm\xbm\|_2^2$, which assumes a linear observation model $\ybm = \Abm\xbm + \ebm$, and the nonsmooth TV regularizer $h(\xbm) = \tau\|\Dbm\xbm\|_1$, where $\tau > 0$ is the regularization parameter and $\Dbm$ is the image gradient~\cite{Rudin.etal1992}.

\textbf{Regularization by denoising (RED).}~RED is a recent methodology for imaging inverse problems that seeks vectors $\xbmast \in \R^n$ satisfying
\begin{equation}
\label{Eq:FixedPoint}
\Gsf(\xbmast) = \nabla g(\xbmast) + \tau(\xbmast-\Dsf_\sigma(\xbmast)) = 0 \quad\Leftrightarrow\quad \xbmast\in\zer(\Gsf)\defn\{\xbm\in\R^{n} \,:\,\Gsf(\xbm)=0\}
\end{equation}
where $\nabla g$ denotes the gradient of the data-fidelity term and $\Dsf_\sigma:\R^n\rightarrow\R^n$ is an image denoiser parameterized by $\sigma>0$.
Under additional technical assumptions, the solutions $\xbmast \in \zer(\Gsf)$ can be associated with an explicit objective function of form \eqref{Eq:Optimization}.
Specifically, when $\Dsf_\sigma$ is locally homogeneous and has a symmetric Jacobian satisfying strong passivity~\cite{Romano.etal2017,Reehorst.Schniter2019},  $\Hsf(\xbm)$ corresponds to the gradient of a convex regularizer
\begin{equation}
\label{Eq:REDregularizer}
h(\xbm) = \frac{1}{2}\xbm^\Tsf(\xbm-\Dsf_\sigma(\xbm)).
\end{equation}

A simple strategy for computing $\xbmast \in \zer(\Gsf)$ is based on the following first-order fixed-point iteration
\begin{equation}
\label{Eq:REDupdate}
\xbm^t = \xbm^{t-1} - \gamma\Gsf(\xbm^{t-1}), \quad\text{with}\quad\Gsf\defn\nabla g + \tau(\Isf-\Dsf_\sigma), \quad\Gsf:\R^n\rightarrow\R^n,
\end{equation}
where $\gamma>0$ denotes the stepsize.
In this paper, we extend this first-order RED algorithm to design \proposed. Since many denoisers do not satisfy the assumptions necessary for having an explicit objective~\cite{Reehorst.Schniter2019}, our theoretical analysis considers a broader setting where $\Dsf_\sigma$ does not necessarily correspond to any explicit regularizer.
The benefit of our analysis is that it accommodates powerful deep denoisers (such as DnCNN~\cite{Zhang.etal2017}) that have been shown to achieve the state-of-the-art performance~\cite{Sun.etal2019c,Wu.etal2020,Cohen.etal2020}.

\textbf{\text{Plug-and-play priors (PnP) and other related work.}}~There are other lines of works that combine the iterative methods with advanced denoisers.
One closely-related framework is known as the \emph{deep mean-shift priors}~\cite{Bigdeli.etal2017}.
It develops an implicit regularizer whose gradient is specified by a denoising autoencoder.
Another well-known framework is PnP, which generalizes proximal methods by replacing the proximal map with an image denoiser~\cite{Venkatakrishnan.etal2013}.
Applications and theoretical analysis of PnP are widely studied in~\cite{Sreehari.etal2016, Zhang.etal2017a, Sun.etal2018b, Zhang.etal2019, Ahmad.etal2020, Wei.etal2020} and~\cite{Chan.etal2016, Meinhardt.etal2017, Buzzard.etal2017, Sun.etal2018a, Tirer.Giryes2019, Teodoro.etal2019, Ryu.etal2019, Xu.etal2020}, respectively.
In particular, \cite{Buzzard.etal2017} proposed a parallel extension of PnP called \emph{Consensus Equilibrium (CE)}, which enables synchronous parallel updates of $\xbm$. Note that while we developed \proposed~as a variant of RED, our framework and analysis can be also potentially applied to PnP/CE.
The plug-in strategy can be also applied to another family of algorithms known as \emph{approximate message passing (AMP)}~\cite{Metzler.etal2016a,Metzler.etal2016,Fletcher.etal2018}.
The AMP-based algorithms are known to be nearly-optimal for random measurement matrices, but are generally unstable for general $\Abm$~\cite{Rangan.etal2014, Rangan.etal2015}.

\textbf{\text{Asynchronous parallel optimization.}}%
~There are two main lines of work in asynchronous parallel optimization, the one involving the asynchrony  in coordinate updates~\cite{Liu.etal2013a, Peng.etal2015, Sun.etal2017, Hannah.etal2018, Hannah.etal2018b}, and the other focusing on the study of various asynchronous stochastic gradient methods~\cite{Recht.etal2011, Lian.etal2015, Liu.etal2018b, Zhou.etal2018, Lian.etal2018}.

Our work contributes to the area by developing a novel deep-regularized asynchronous parallel method with provable convergence guarantees.

\section{Asynchronous RED}
\label{Sec:Method}

\proposed~addresses the computational bottleneck by simultaneously considering the asynchronous partial updates of image $\xbm$ and the randomized usage of measurements $\ybm$.
In this section, we introduce the algorithmic details of our method.
We start with the basic batch formulation of \proposed~(\proposed-BG) followed by its minibatch variant (\proposed-SG).

\subsection{\proposed~using Batch Gradient}
When the gradient uses all the measurements $\ybm \in \R^m$, \proposed-BG~is the \emph{asynchronous} extension of the recent block-coordinate RED (BC-RED) algorithm~\cite{Sun.etal2019c}.
Consider the decomposition of the variable space $\R^n$ into $b \geq 1$ blocks
$$\xbm=(\xbm_{1},\cdots,\xbm_{b})\in\R^{n_1} \times \cdots \times \R^{n_b}=\R^n\quad\text{with}\quad n = n_1 + n_2 + \cdots + n_b,$$
For each $i \in \{1, \dots, b\}$, we introduce the operator $\Usf_i: \R^{n_i} \rightarrow \R^n$ that injects a vector in $\R^{n_i}$ into $\R^n$ and its transpose $\Usf_i^\Tsf$ that extracts the $i$th block from a vector in $\R^n$. This directly implies that
\begin{equation}
\Isf=\Usf_1\Usf_1^\Tsf + \cdots + \Usf_b\Usf_b^\Tsf \quad\text{and}\quad \|\xbm\|_2^2 = \|\xbm_1\|_2^2+ \cdots + \|\xbm_b\|_2^2 \quad\text{with}\quad\xbm_i=\Usf_{i}^\Tsf\xbm.
\end{equation}
In analogy to the RED operator $\Gsf$ in~\eqref{Eq:FixedPoint}, we define the block-coordinate operator $\Gsf_i$ as
\begin{equation}
\Gsf_i(\xbm) \defn \Usf_i\Usf_i^\Tsf\Gsf(\xbm), \quad\text{with}\quad\xbm \in \R^n \quad\text{and}\quad \Gsf_i:\R^n\rightarrow\R^n.
\end{equation}
Due to the asynchrony in the block updates, the iterate might be updated several times by different cores during a single update cycle of a core, which means that the evaluation of $\xbm^{k+1}$ relies on a \emph{stale} iterate $\xbmtilde^{k}$
\begin{equation}
\xbm^{k+1}\leftarrow\xbm^{k}-\gamma\Gsf_{i_k}(\xbmtilde^{k}), \quad\text{with}\quad \xbmtilde^{k} = \xbm^{k} + \sum_{s = k-\Delta_k}^{k-1}(\xbm^s - \xbm^{s+1}),\quad\Delta_k \leq \lambda.
\end{equation}
Here, we assume that the stale iterate $\xbmtilde^{k}$ exits as a state of $\xbm$ in the shared memory, and the delay between them is bounded by a finite number $\lambda\in\Z_+$. 
These two assumptions are often referred to as the \emph{consistent read}~\cite{Recht.etal2011} and the \emph{bounded delay}~\cite{Liu.etal2015b} in the traditional asynchronous block coordinate optimization. 
Although we implement the consistent read in \proposed, the algorithm never imposes a global lock on $\xbm^k$. We refer to Supplement~\ref{Sup:Memory} for the related discussion.

We now introduce the first variant, \proposed-BG.
\begin{algorithm}[H]
\caption{\proposed-BG}
\label{Alg:AsyncREDbg}
\begin{algorithmic}[1]
\STATE \textbf{input: } $\xbm^0 \in \R^n$, $\gamma > 0$, $\tau > 0$.
\STATE \textbf{setup: } A multicore system with one shared memory storing $\xbm$ and global iteration $k$.
\FOR{$\textbf{global}\;k=1,2,3,\dots$}
\STATE $\xbmtilde^{k} \leftarrow \mathsf{read}(\xbm)$ 
\STATE $\Gsf_{i_k}(\xbmtilde^{k}) \leftarrow \Usf_{i_k}\Usf_{i_k}^\Tsf\Gsf(\xbmtilde^k)$
\quad with random $i_k\in\{1,\dots,b\}$ \COMMENT{Block Operation}
\STATE $\xbm^{k} \leftarrow \mathsf{read}(\xbm)$ 
\STATE $\xbm^{k+1} \leftarrow \xbm^{k} - \gamma\Gsf_{i_k}(\xbmtilde^{k})$
\STATE update $\xbm$ in the shared memory using $\xbm^{k+1}$
\ENDFOR
\end{algorithmic}
\end{algorithm}%
When the algorithm is run on a single core system without parallelization (that is to say $\xbmtilde^k=\xbm^k$), it reduces to the normal BC-RED algorithm.
Hence, our analysis is also applicable to BC-RED.

We specifically consider the \emph{random} block selection strategy in \proposed-BG, namely that every block index $i_k$ is selected as an i.i.d random variable uniformly distributed over $\{1,\dots,b\}$.
Such a strategy is commonly adopted for simplifying the convergence analysis.
Nevertheless, our method and analysis can be generalized to the scenario where $i_k$ follows some arbitrary probability $P(i_k=i)=p_i$ specified by the user.

Compared with serial RED algorithms, \proposed-BG enjoys considerable scalability by dividing the computation of the full operator $\Gsf$ into $b$ parallel evaluation of $\Gsf_i$ distributed across all cores.
Thus, without any modification to the algorithmic design, one can easily improve the performance of the algorithm by simply integrating more cores into the system.
In Section~\ref{Sec:Experiments}, we experimentally demonstrate the significant speed-up and scale-up in solving the context of image recovery.


\subsection{Async-RED using Stochastic Gradient}

The scale of measurements is another important factor influencing the computational complexity in the large-scale inference tasks.
\proposed-SG improves the applicability of \proposed~to these cases by further considering the decomposition of the measurement space $\R^m$ into $\ell\geq1$ blocks
$$\ybm=(\ybm_{1},\cdots,\ybm_{\ell})\in\R^{m_1} \times \cdots \times \R^{m_\ell}=\R^m\quad\text{with}\quad m = m_1 + m_2 + \cdots + m_\ell.$$
Hence, \proposed-SG~considers the following data-fidelity $g$ and its gradient $\nabla g$
\begin{equation}
\label{Eq:StochData}
g(\xbm)=\frac{1}{\ell}\sum_{j=1}^{\ell}g_j(\xbm)\quad\Rightarrow\quad \nabla g(\xbm)=\frac{1}{\ell}\sum_{j=1}^{\ell}\nabla g_j(\xbm),
\end{equation}
where each $g_j$ is evaluated on the subset $\ybm_j\in\R^{m_j}$ of the full $\ybm$.
From~\eqref{Eq:StochData}, we know that the computation of $\nabla g(\xbm)$ is proportional to the total number $\ell$.
To reduce the per-iteration cost, we follow the idea of stochastic optimization to approximate the batch gradient by using the stochastic gradient that relies on a minibatch of $w\ll\ell$ measurements
\begin{equation}
\label{Eq:StochGradient}
\widehat{\nabla} g(\xbm)=\frac{1}{w}\sum_{s=1}^{w}\nabla g_{j_s}(\xbm),
\end{equation}
where $j_s$ is picked from the set $\{1,\dots,\ell\}$ as i.i.d uniform random variable. Based on the minibatch gradient, we define the block stochastic operator $\Gsfhat_i:\R^n\rightarrow\R^n$ as
\begin{equation}
\Gsfhat_i \defn \Usf_i\Usf_i^\Tsf\Gsfhat(\xbm), \quad\text{with}\quad\Gsfhat\defn\widehat{\nabla} g(\xbm) + \tau(\xbm-\Dsf_\sigma(\xbm)), \quad\Gsfhat:\R^n\rightarrow\R^n.
\end{equation}
Note that the computation of $\Gsfhat_i$ is now dependent on the minibatch size $w$ that is adjustable to cope with the computation resources at hand. \proposed-SG is summarized in Algorithm~\ref{Alg:AsyncREDsg}.
\begin{algorithm}[H]
\caption{\proposed-SG}
\label{Alg:AsyncREDsg}
\begin{algorithmic}[1]
\STATE \textbf{input: } $\xbm^0 \in \R^n$, $\gamma > 0$, $\tau > 0$.
\STATE \textbf{setup: } A multicore system with one shared memory storing $\xbm$ and global iteration $k$.
\FOR{$\textbf{global}\;k=1,2,3,\dots$}
\STATE $\xbmtilde^{k} \leftarrow \mathsf{read}(\xbm)$ 
\STATE $\Gsfhat(\xbmtilde^{k}) \leftarrow \mathsf{minibatchG}(\xbmtilde^{k}, w)$\quad with random $j_w\in\{1,\dots,\ell\}$ \COMMENT{Minibatch Gradient}
\STATE $\Gsfhat_{i_k}(\xbmtilde^{k}) \leftarrow \Usf_{i_k}\Usf_{i_k}^\Tsf\Gsfhat(\xbmtilde^k)$ \quad with random $i_k\in\{1,\dots,b\}$ \COMMENT{Block Operation}
\STATE $\xbm^{k} \leftarrow \mathsf{read}(\xbm)$ 
\STATE $\xbm^{k+1} \leftarrow \xbm^{k} - \gamma\Gsfhat_{i_k}(\xbmtilde^{k})$
\STATE update $\xbm$ in the shared memory using $\xbm^{k+1}$
\ENDFOR
\end{algorithmic}
\end{algorithm}%
\vspace{-10pt}
We clarify the difference between \proposed-SG and \proposed-BG via a specific example.
Consider the least-squares $g$ with a block-friendly operator $\Abm$ and a block-efficient denoiser $\Dsf_\sigma$.
We can write the update of \proposed-BG regarding a single iteration as
\begin{equation}
\Gsf_i(\xbmtilde) = \Abm_{i}^\Tsf (\Abm_i\xbmtilde-\ybm_i) + \tau (\xbmtilde_i - \Dsf(\xbmtilde_i)),
\end{equation}
where $\xbmtilde$ is the delayed iterate for $\xbm$, and $\Abm_{i} \in \R^{m \times n_i}$ is a submatrix of $\Abm$ consisting of columns corresponding to the $i$th blocks. 
Although the per-iteration complexity is reduced by roughly $b=n/n_i$ times by working with $\Abm_i$ instead of $\Abm$, \proposed-BG still needs to work with all the measurements $\ybm_i$ related to the $i$th block at every iteration.
Consider the corresponding update of \proposed-SG with one measurement used at a time
\begin{equation}
\label{Eq:SGupdate}
\Gsfhat_i(\xbmtilde) = \Abm_{ji}^\Tsf (\Abm_{ji}\xbmtilde-\ybm_{ji}) + \tau (\xbmtilde_i - \Dsf(\xbmtilde_i)),
\end{equation}
where $\ybm_{ji}$ denotes the $j$th measurement of $\xbm_i$,
and $\Abm_{ji} \in \R^{m_j \times n_i}$ is the submatrix crossed by the rows and columns corresponding to the $j$th measurement and the $i$th blocks.
This indicates that the reduction of the per-iteration complexity from \proposed-BG to \proposed-SG can be up to $\ell=m/m_j$ times.
In the practice, it is common to use $w>1$ measurements at a time to optimize the total runtime.
Note that if $\Usf=\Usf^\Tsf=\Isf$, \proposed-SG becomes the asynchronous stochastic RED algorithm.
In the next section, we will present a complete analysis of \proposed~and theoretically discuss its connection to the related algorithms.

\section{Convergence Analysis of \proposed}
\label{Sec:Theory}

The proposed analysis is based on the following explicit assumptions. Note that these assumptions serve as sufficient conditions for the convergence.
\begin{assumption}
\label{As:BoundedDelay}
We assume bounded maximal delay $\lambda < \infty$. Hence, during any update cycle of an agent, the estimate $\xbm$ in the shared memory is updated at most $\lambda\in\Z_+$ times by other cores.
\end{assumption}
The value of $\lambda$ is often dependent on the number of cores involved
in the computation~\cite{Wright2015}.
If every core takes a similar amount of time to compute its update, $\lambda$ is expected to be a multiple of the number of cores.
Related work has investigated the convergence with unbounded maximal delays in the context of traditional optimization~\cite{Hannah.etal2018, Peng.etal2019, Zhou.etal2018}.
\begin{assumption}
\label{As:NonemptySet}
The operator $\Gsf$ is such that $\zer(\Gsf) \neq \varnothing$, and the distance of the initial $\xbm^0 \in \R^n$ to any element in $\zer(\Gsf)$ is bounded, that is
$\|\xbm^0-\xbmast\| \leq R_0$ for all $ \xbmast \in \zer(\Gsf)$ with $R_0<\infty$.
\end{assumption}
This assumption ensures the existence of a solution for the RED problem and is related to the existence of minimizers in traditional coordinate minimization~\cite{Nesterov2012, Beck.Tetruashvili2013}


\begin{assumption}
\label{As:DataFitConvexity}
\textbf{(a)} Every component function $g_i$ is convex differentiable and has a Lipschitz continuous gradient of constant $L_i>0$. \textbf{(b)} At every update, the stochastic gradient is unbiased estimator of $\nabla g$ that has a bounded variance:
$$\E\left[\widehat{\nabla} g(\xbm)\right]=g(\xbm),\quad\E\left[\|\widehat{\nabla} g(\xbm)-\nabla  g(\xbm)\|^2\right]\leq\frac{\nu^2}{w},\quad \xbm \in \R^n,\quad\nu>0.$$
\end{assumption}
The first part of the assumption implies that $g$ is also convex and has Lipschitz continuous gradient with constant $L = \max\{L_1, \dots, L_\ell\}$. The second part is a standard assumption on the unbiasedness and variance of the stochastic gradient~\cite{Lian.etal2015,Ghadimi.Lan2016}. Our final assumption is related to the deep denoiser used in \proposed.
\begin{assumption}
\label{As:NonexpansiveDen}
The denoiser $\Dsf_\sigma$ is a nonexpansive operator $\|\Dsf_\sigma(\xbm)-\Dsf_\sigma(\ybm)\|\leq\|\xbm-\ybm\|$.
\end{assumption}
Compared with the conditions stated in Section~\ref{Sec:Background} (namely, that it is locally homogeneous with a symmetric Jacobian), our requirement on the denoiser is milder.
One can train a nonexpansive $\Dsf_\sigma$ by constraining the Lipschitz constant of $\Dsf_\sigma$ via the spectral normalization, which is an active area of research in deep learning~\cite{Miyato.etal2018,Sedghi2018,Anil.etal2019}.

We can now state the theorems on \proposed.

\begin{theorem}
\label{Theo:ConvergenceBG}
Let Assumptions~\ref{As:BoundedDelay}-\ref{As:NonexpansiveDen} hold true. Run \proposed-BG for $t>0$ iterations with uniform i.i.d block selection using a fixed step-size $\gamma\in(0,1/((1+2\lambda)(L+2\tau))]$. Then, the iterates of the algorithm satisfy
\begin{equation}
\label{Eq:ConvergenceBG}
\min_{0\leq k\leq t-1}\E\left[\|\Gsf(\xbm^k)\|^2\right]\leq\left[\frac{D}{b}+2\right]\frac{(L+2\tau)b}{\gamma t}R_0^2.
\end{equation}
where $D=2\lambda^2/(1+\lambda)^2$ is a constant.
\end{theorem}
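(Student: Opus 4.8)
The plan is to run the Lyapunov argument standard for asynchronous block-coordinate fixed-point iterations (the same template that underlies the serial BC-RED analysis of~\cite{Sun.etal2019c}), after first pinning down the two quantitative properties of $\Gsf=\nabla g+\tau(\Isf-\Dsf_\sigma)$ that drive everything: that $\Gsf$ is $\tfrac{1}{L+2\tau}$-cocoercive and $(L+2\tau)$-Lipschitz. From Assumption~\ref{As:DataFitConvexity}(a) and the Baillon--Haddad theorem, $\nabla g$ is $\tfrac1L$-cocoercive, equivalently $\Isf-\tfrac2L\nabla g$ is nonexpansive; from Assumption~\ref{As:NonexpansiveDen}, $\tfrac12(\Isf+\Dsf_\sigma)$ is firmly nonexpansive, so $\tau(\Isf-\Dsf_\sigma)$ is $\tfrac1{2\tau}$-cocoercive and $\Isf-\Dsf_\sigma$ is $2$-Lipschitz. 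The key observation is that, with $\eta=\tfrac{2}{L+2\tau}$,
\[
\Isf-\eta\Gsf=\tfrac{L}{L+2\tau}\Bigl(\Isf-\tfrac2L\nabla g\Bigr)+\tfrac{2\tau}{L+2\tau}\,\Dsf_\sigma
\]
is a convex combination of nonexpansive operators, hence nonexpansive, which is exactly $\langle\Gsf(\xbm)-\Gsf(\ybm),\xbm-\ybm\rangle\ge\tfrac1{L+2\tau}\|\Gsf(\xbm)-\Gsf(\ybm)\|^2$; the Lipschitz bound is the triangle inequality. Fixing $\xbmast\in\zer(\Gsf)$ (nonempty by Assumption~\ref{As:NonemptySet}) and using $\Gsf(\xbmast)=0$ turns these into the inequalities used below.

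Next I would derive a one-step recursion for $r_k\defn\E\|\xbm^k-\xbmast\|^2$. Expanding $\|\xbm^{k+1}-\xbmast\|^2$ from $\xbm^{k+1}=\xbm^k-\gamma\Gsf_{i_k}(\xbmtilde^k)$ and taking the expectation over the uniform draw of $i_k$, using $\E_{i_k}\bigl[\Gsf_{i_k}(\xbmtilde^k)\bigr]=\tfrac1b\Gsf(\xbmtilde^k)$ and $\E_{i_k}\bigl[\|\Gsf_{i_k}(\xbmtilde^k)\|^2\bigr]=\tfrac1b\|\Gsf(\xbmtilde^k)\|^2$, the only nontrivial term is $\langle\Gsf(\xbmtilde^k),\xbm^k-\xbmast\rangle=\langle\Gsf(\xbmtilde^k),\xbmtilde^k-\xbmast\rangle+\langle\Gsf(\xbmtilde^k),\xbm^k-\xbmtilde^k\rangle$: cocoercivity bounds the first summand below by $\tfrac1{L+2\tau}\|\Gsf(\xbmtilde^k)\|^2$, and in the second I substitute $\xbm^k-\xbmtilde^k=-\gamma\sum_{s=k-\Delta_k}^{k-1}\Gsf_{i_s}(\xbmtilde^s)$ and apply Young's inequality term by term together with $\Delta_k\le\lambda$. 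Writing $a_k\defn\E\|\xbm^{k+1}-\xbm^k\|^2=\tfrac{\gamma^2}{b}\E\|\Gsf(\xbmtilde^k)\|^2$ and using that $\|\xbm^{s+1}-\xbm^s\|^2=\gamma^2\|\Gsf_{i_s}(\xbmtilde^s)\|^2$, taking total expectations yields
\[
r_{k+1}\le r_k-\Bigl(\tfrac{2}{\gamma(L+2\tau)}-1-\lambda\Bigr)a_k+\tfrac1b\sum_{s=k-\lambda}^{k-1}a_s,
\]
with the convention that the stale increments for $s<0$ vanish.

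Then I would cancel the coupling to the past using the Lyapunov function $\xi^k\defn r_k+\tfrac1b\sum_{j=1}^{\lambda}(\lambda-j+1)\,a_{k-j}$: the weights $\lambda-j+1$ are exactly those for which the coefficient of every $a_{k-i}$, $i\ge1$, cancels in $\xi^{k+1}-\xi^k$, leaving $\xi^{k+1}\le\xi^k-C a_k$ with $C=\tfrac{2}{\gamma(L+2\tau)}-(1+\lambda)-\tfrac\lambda b$. The step-size restriction $\gamma\le\bigl((1+2\lambda)(L+2\tau)\bigr)^{-1}$ gives $\tfrac1{\gamma(L+2\tau)}\ge1+2\lambda\ge(1+\lambda)+\tfrac\lambda b$, hence $C\ge\tfrac1{\gamma(L+2\tau)}>0$. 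Telescoping over $k=0,\dots,t-1$ with $\xi^t\ge0$ and $\xi^0=r_0\le R_0^2$ (Assumption~\ref{As:NonemptySet}) gives $\sum_k a_k\le R_0^2/C\le\gamma(L+2\tau)R_0^2$, hence $\sum_k\E\|\Gsf(\xbmtilde^k)\|^2=\tfrac b{\gamma^2}\sum_k a_k\le\tfrac{b(L+2\tau)R_0^2}{\gamma}$. Finally I pass from $\xbmtilde^k$ to $\xbm^k$: by $(L+2\tau)$-Lipschitzness, $\|\xbm^k-\xbmtilde^k\|\le\sum_{s=k-\Delta_k}^{k-1}\|\xbm^{s+1}-\xbm^s\|$, and Cauchy--Schwarz with $\Delta_k\le\lambda$ give $\E\|\Gsf(\xbm^k)\|^2\le2\E\|\Gsf(\xbmtilde^k)\|^2+2(L+2\tau)^2\lambda\sum_{s=k-\lambda}^{k-1}a_s$; summing over $k$, bounding the double sum by $\lambda\sum_k a_k$, and inserting the two estimates above produces $\sum_k\E\|\Gsf(\xbm^k)\|^2\le\bigl(2b+2(\gamma(L+2\tau))^2\lambda^2\bigr)\tfrac{(L+2\tau)R_0^2}{\gamma}$. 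Dividing by $t$, lower-bounding the average by the minimum, and using $2(\gamma(L+2\tau))^2\lambda^2\le\tfrac{2\lambda^2}{(1+2\lambda)^2}\le\tfrac{2\lambda^2}{(1+\lambda)^2}=D$ gives~\eqref{Eq:ConvergenceBG}.

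The step I expect to be the main obstacle is the Lyapunov bookkeeping: choosing the weights so that all $\lambda$ delayed inner-product terms telescope exactly while the residual coefficient of $a_k$ stays positive under the stated step-size, and then tracking the constants through the $\xbmtilde^k\!\to\!\xbm^k$ conversion so that the final bound carries precisely $D/b+2$ rather than a constant inflated by the slack in Young's inequality. Getting the sharp cocoercivity constant $\tfrac1{L+2\tau}$ in the first step (the convex-combination trick) is what prevents an extra factor of $2$, so that step also needs care. A minor technical point is the conditioning, since the past stochastic gradients $\Gsf_{i_s}(\xbmtilde^s)$, $s<k$, that appear inside the cross term must be treated as measurable with respect to the history before $i_k$ is drawn; this requires the usual care in setting up the filtration for asynchronous updates.
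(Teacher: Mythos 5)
Your proposal is correct and follows essentially the same route as the paper's proof: the same cocoercivity lemma for $\Gsf$ obtained from the convex-combination decomposition, the same conditional-expectation expansion of $\|\xbm^{k+1}-\xbmast\|^2$ with the cross term split and bounded via cocoercivity plus Young's inequality on the delayed increments, and the same Lipschitz-based conversion from $\xbmtilde^k$ to $\xbm^k$, yielding the identical constant $[D/b+2]$ under the same step-size restriction. The only difference is bookkeeping: you cancel the delayed coupling per step with a weighted Lyapunov function (and use $\E\left[\|\Gsf_{i_s}(\xbmtilde^s)\|^2\right]=\tfrac{1}{b}\E\left[\|\Gsf(\xbmtilde^s)\|^2\right]$ in the coupling), whereas the paper telescopes directly and absorbs the delayed terms with the double-sum counting bound~\eqref{Eq:DoubleSum}; both give the same final estimate.
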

Theorem~\ref{Theo:ConvergenceBG} establishes the convergence of \proposed-BG to the fixed-point set $\zer(\Gsf)$ at the rate of $O(1/t)$.
Our result is consistent with the existing results in the literature. In particular, when the algorithm adopts serial block updates, that is $\lambda=0$ and $\xbmtilde^k=\xbm^k$, the recovered convergence is nearly the same as BC-RED~\cite{Sun.etal2019c} scaled by some constant.
On the other hand, our convergence rate $O(1/t)$ is also consistent with the rate proved for the asynchronous block coordinate descent in nonconvex optimization~\cite{Sun.etal2017}.
\begin{theorem}
\label{Theo:ConvergenceSG}
Let Assumptions~\ref{As:BoundedDelay}-\ref{As:NonexpansiveDen} hold true. Run \proposed-SG for $t>0$ iterations with uniform i.i.d selections of blocks and measurements using a fixed step-size $\gamma\in(0,1/((1+2\lambda)(L+2\tau))]$. Then, the iterates of the algorithm satisfy
\begin{equation}
\label{Eq:ConvergenceSG}
\min_{0\leq k\leq t-1}\E\left[\|\Gsf(\xbm^k)\|^2\right]\leq\left[\frac{D}{b}+2\right]\frac{(L+2\tau)b}{\gamma t}R_0^2+\left[\frac{2D}{b}+2\right]\frac{\gamma}{w}C
\end{equation}
where $C=(L+2\tau)(1+\lambda)\nu^2$ and $D=2\lambda^2/(1+\lambda)^2$ are constants.
\end{theorem}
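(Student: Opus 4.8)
The plan is to obtain \eqref{Eq:ConvergenceSG} by re-running the argument behind Theorem~\ref{Theo:ConvergenceBG} and treating the minibatch gradient as the true gradient plus a zero-mean, bounded-variance perturbation. Write $M \defn L + 2\tau$ and split $\Gsfhat(\xbm) = \Gsf(\xbm) + \ebm(\xbm)$, where $\ebm(\xbm) \defn \widehat{\nabla} g(\xbm) - \nabla g(\xbm)$; by Assumption~\ref{As:DataFitConvexity}(b), $\E[\ebm(\xbm)] = 0$ and $\E\|\ebm(\xbm)\|^2 \le \nu^2/w$. I would first record the two structural facts already used for \proposed-BG. First, since $g$ is convex with $L$-Lipschitz gradient (Baillon--Haddad) and $\Dsf_\sigma$ is nonexpansive (so $\Isf - \Dsf_\sigma$ is $\tfrac12$-cocoercive), the operator $\Gsf = \nabla g + \tau(\Isf - \Dsf_\sigma)$ is $M$-Lipschitz and $\tfrac1M$-cocoercive, which at any $\xbmast \in \zer(\Gsf)$ gives $\langle \xbm - \xbmast, \Gsf(\xbm)\rangle \ge \tfrac1M\|\Gsf(\xbm)\|^2$. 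Second, for a uniformly random block index, $\E_{i_k}[\Usf_{i_k}\Usf_{i_k}^\Tsf \vbm] = \tfrac1b \vbm$ and $\E_{i_k}\|\Usf_{i_k}\Usf_{i_k}^\Tsf \vbm\|^2 = \tfrac1b\|\vbm\|^2$.

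Next I would expand $\|\xbm^{k+1} - \xbmast\|^2 = \|\xbm^k - \xbmast - \gamma \Usf_{i_k}\Usf_{i_k}^\Tsf(\Gsf(\xbmtilde^k) + \ebm^k)\|^2$ and take the conditional expectation over the block index $i_k$ and the minibatch drawn at step $k$, given the history producing $\xbm^k$ and $\xbmtilde^k$. The key point is to set the filtration up so that this minibatch is independent of $(\xbm^k, \xbmtilde^k, i_k)$; then the two cross terms carrying $\ebm^k$ — namely $\langle \xbm^k - \xbmast, \Usf_{i_k}\Usf_{i_k}^\Tsf \ebm^k\rangle$ and $\langle \Gsf(\xbmtilde^k), \ebm^k\rangle$ — vanish in expectation, and the only residue of the noise is the variance term $\tfrac{\gamma^2}{b}\E\|\ebm^k\|^2 \le \tfrac{\gamma^2}{b}\cdot\tfrac{\nu^2}{w}$ coming out of the quadratic term. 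The result is the \emph{same} one-step recursion as for \proposed-BG, with one extra additive term $\tfrac{\gamma^2 \nu^2}{bw}$ on the right-hand side.

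I would then handle the stale read exactly as in the proof of Theorem~\ref{Theo:ConvergenceBG}: decompose $\langle \xbm^k - \xbmast, \Gsf(\xbmtilde^k)\rangle = \langle \xbmtilde^k - \xbmast, \Gsf(\xbmtilde^k)\rangle + \langle \xbm^k - \xbmtilde^k, \Gsf(\xbmtilde^k)\rangle$, lower-bound the first piece by cocoercivity, and control the second through $\|\xbm^k - \xbmtilde^k\| \le \gamma \sum_{s=k-\lambda}^{k-1}\|\Gsfhat_{i_s}(\xbmtilde^s)\|$ followed by a suitably weighted Young inequality. Relative to the batch proof the only change is that each step norm now obeys $\E\|\Gsfhat_{i_s}(\xbmtilde^s)\|^2 \le \tfrac1b(\E\|\Gsf(\xbmtilde^s)\|^2 + \nu^2/w)$, so the delay window contributes an additional $O(\gamma^2 \nu^2 \lambda/(b^2 w))$ per step. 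Summing over $k = 0,\dots,t-1$, telescoping $\|\xbm^k - \xbmast\|^2$ (bounded by $R_0^2$ via Assumption~\ref{As:NonemptySet}), and using $\sum_k \sum_{s=k-\lambda}^{k-1}\E\|\Gsf(\xbmtilde^s)\|^2 \le \lambda \sum_k \E\|\Gsf(\xbmtilde^k)\|^2$, the step-size restriction $\gamma \le 1/((1+2\lambda)M)$ is exactly what leaves a positive multiple of $\sum_k \E\|\Gsf(\xbmtilde^k)\|^2$ on the left; this produces a bound of the shape $\sum_k \E\|\Gsf(\xbmtilde^k)\|^2 \le O(\tfrac{Mb}{\gamma}R_0^2) + O(\tfrac{\gamma M t}{w}\nu^2(1+\lambda))$, and the constants $D = 2\lambda^2/(1+\lambda)^2$ and $C = M(1+\lambda)\nu^2$ drop out of this accounting. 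Finally I would pass from the stale to the current iterate via $\|\Gsf(\xbm^k)\| \le \|\Gsf(\xbmtilde^k)\| + M\|\xbm^k - \xbmtilde^k\|$ (the delay bound again feeding the noise into the $C$-term), divide by $t$, and take $\min_{0\le k\le t-1}$ to reach \eqref{Eq:ConvergenceSG}.

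The substantive obstacle is the asynchrony bookkeeping — making the delay window telescope with the advertised constants under the stated step size — but this is the same obstacle already resolved for Theorem~\ref{Theo:ConvergenceBG}, so for \proposed-SG it reduces to a perturbation argument. The two genuinely new points are milder: (i) a careful description of the probability space and filtration so that ``the minibatch at iteration $k$ is conditionally mean-zero and independent of the block index and of the (possibly stale) iterates it is evaluated on'' is rigorous in the asynchronous read/write model; and (ii) tracking the $\nu^2/w$ terms through both the leading quadratic term and the delay window so the two contributions assemble into the single coefficient $[\tfrac{2D}{b} + 2]\tfrac{\gamma}{w}C$. I expect (i) to require the most care, since it is the only place where the proof is not a line-by-line perturbation of the proof of Theorem~\ref{Theo:ConvergenceBG}.
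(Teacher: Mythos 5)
Your proposal is correct and takes essentially the same route as the paper's proof: the same cocoercivity lemma for $\Gsf$, the same unbiasedness/bounded-variance treatment of $\Gsfhat$ (so the noise survives only in the quadratic term as $\gamma^2\nu^2/(wb)$), the same Young-inequality handling of the stale read under the bounded-delay relaxation, the same telescoping under $\gamma\le 1/((1+2\lambda)(L+2\tau))$, and the same Lipschitz passage from $\xbmtilde^k$ to $\xbm^k$. The only deviation is cosmetic bookkeeping (e.g., taking the block expectation of $\|\Gsfhat_{i_s}(\xbmtilde^s)\|^2$ gives an extra $1/b$ in the delay-window noise term, whereas the paper bounds $\|\Gsfhat_{i_s}\|\le\|\Gsfhat\|$ first), which only tightens the intermediate bound and still yields the stated constants $C$ and $D$ after a trivial relaxation.
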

Theorem~\ref{Theo:ConvergenceSG} states that \proposed-SG approximates the solution obtained by \proposed-BG up to a finite error that decreases for larger values of the minibatch size $w$. This relationship is consistent with the recent theoretical results on the online PnP and RED algorithms~\cite{Sun.etal2018a,Wu.etal2020}.
In practice, the selection of $w$ must balance the actual memory capacity of the system and the desired runtime for obtaining a reasonable solution.
Our numerical evaluation in Section~\ref{Sec:Experiments} demonstrates the excellent approximation of \proposed-SG to the batch-gradient solution by using a small subset of data.

\begin{figure}[t]
\centering\includegraphics[width=0.99\linewidth]{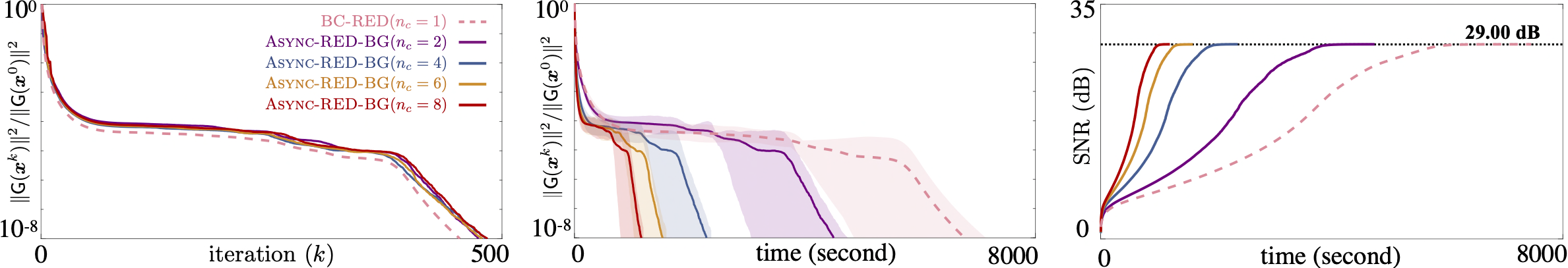}
\caption{Convergence of \proposed-BG for different numbers of accessible cores $n_c\in\{2,4,6,8\}$.
The left figure plots the average normalized distance to $\zer(\Gsf)$ against the iteration number; the middle and right figures plot these values, as well as SNR, plotted against the actual runtime in seconds.
The shaded areas represent the range of values attained over the test images.
}
\label{Fig:ConvergenceBG}
\end{figure}

By carefully choosing the stepsize $\gamma$, we can state the following remark on Theorem~\ref{Theo:ConvergenceSG}. \\
\textbf{Remark 1.} Set the stepsize to be $\gamma=1/\sqrt{wt}$. If the maximal delay satisfies $\lambda\leq(1/2)[\sqrt{wt}/(L+2\tau)-1]$, then after $t>0$ iterations we have
\begin{equation}
\min_{0\leq k\leq t-1}\E\left[\|\Gsf(\xbm^k)\|^2\right]
\leq\left[\frac{D}{b}+2\right]\frac{(L+2\tau)b}{\sqrt{wt}}R_0^2+\left[\frac{2D}{b}+2\right]\frac{C}{\sqrt{wt}}.
\end{equation}
This establishes the fixed-point convergence to the set $\zer(\Gsf)$ at the rate of $O(1/\sqrt{wt})$ under specific conditions. If we treat entire $\xbm$ as a block, namely that $\Usf=\Usf^\Tsf=\Isf$ and $b=1$, \proposed-SG then becomes the asynchronous stochastic RED algorithm.
Hence, the proposed remark immediately holds true for the later.
Note that our convergence rate $O(1/\sqrt{wt})$ is consistent with the rate proved for the serial~\cite{Nemirovski.etal2009} and parallel~\cite{Dekel.etal2010, Lian.etal2015} stochastic gradient methods.

All the proofs are presented in the supplement.
Our analysis never assumes the existence of an explicit regularizer, and hence applicable to advanced denoisers that are not associated with any regularizer.

\begin{figure}[t]
\centering\includegraphics[width=0.99\linewidth]{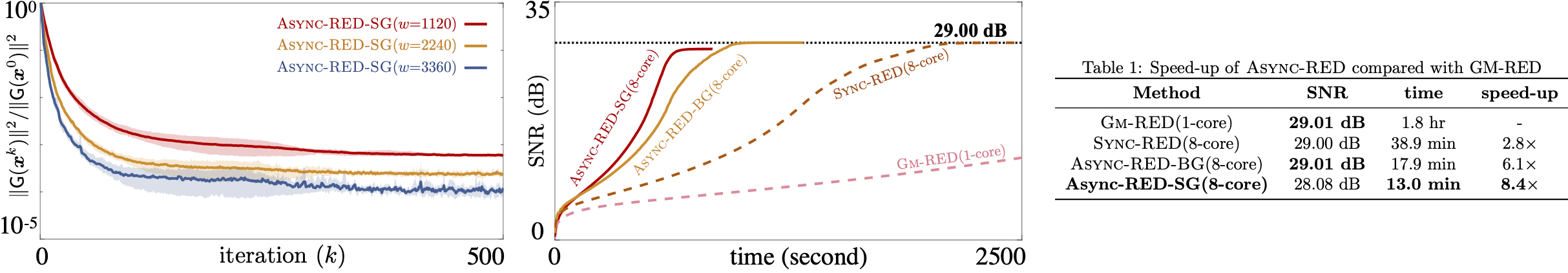}
\caption{\emph{\textbf{Left:}} Evolution of the convergence accuracy of \proposed-SG as the minibatch size $w$ increases.
The average distance is plotted against the number of iterations with the shaded areas representing the range of values attained over the test images.
\emph{\textbf{Middle \& Right:}} Comparison of convergence speed between \proposed-BG/SG~and other baselines. The right table summarizes the total runtime and the speed-up compared with \textsc{Gm-RED} for all algorithms.
}
\label{Fig:ConvergenceSG}
\end{figure}

\section{Numerical Validation}
\label{Sec:Experiments}

We now present a numerical validation of \proposed.
Our goals are first to validate the proposed theorems in Section~\ref{Sec:Theory} and then to demonstrate the effectiveness and the efficiency of our algorithm on the large-scale problem.
We consider two image recovery tasks that have the form $\ybm=\Abm\xbm+\ebm$, where the measurement matrix $\Abm$ corresponds to either the random matrix in \emph{compressive sensing (CS)} or the Radon transform in \emph{computed tomography (CT)}, and the noise $\ebm$ is assumed to be additive white Gaussian (AWGN).
In particular, the random matrix is implemented with the block-diagonal structure $\Abm = \diag([\Abm_i,...,\Abm_b])$ for fast validation, while the Radon transform is used as its full matrix form to demonstrate the effectiveness of \proposed~for overcoming the computation bottleneck.
Our deep neural net prior adapts the DnCNN architecture~\cite{Zhang.etal2017}.
We used the signal-to-noise ratio (dB) to quantify the quality of the reconstructed images.
For each experiments, we selected the denoiser that achieves the best SNR performance from the ones corresponding to five noise levels $\sigma\in \{5,10,15,20,25\}$. Supplement~\ref{Sup:Experiments} provides additional technical details.

\subsection{Convergence Behavior}
We validate our theorems on the CS task with $6$ test images selected from the $\emph{Set 12}$ dataset~\cite{Zhang.etal2017}.
Each test image is rescaled to the size of $240\times240$ pixels (see Fig.~\ref{Fig:TestImages} in the supplement for the visualization).
The block-diagonal matrix $\Abm$ is set to consist of $9$ submatrices, corresponding to a $3\times3$ grid of blocks with the size of $80\times80$ pixels in every image.
The elements in $\Abm$ are i.i.d zero-mean Gaussian random variables of variance of $1/m$, and the compression ratio is set to be $m/n=0.7$, which indicates that the total number of measurements is $4480$ for each block.
We obtain the measurements by multiplying $\Abm$ with each vectorized image and adding additional noise corresponding to the input SNR of $30$ dB.
Finally, we use the normalized distance $\|\Gsf(\xbm^k)\|_2^2/\|\Gsf(\xbm^0)\|_2^2$ to quantify the fixed-point convergence, with $b$ block updates grouped as one iteration.
The distance is expected to approach zero as the algorithm converges to a fixed point.

Theorem~\ref{Theo:ConvergenceBG} establishes the convergence of \proposed-BG to the fixed point set $\zer(\Gsf)$.
This is illustrated in Fig.~\ref{Fig:ConvergenceBG} for four different numbers of accessible cores $n_c\in\{2,4,6,8\}$.
In the left figure, the average normalized distance is plotted against the iteration number, while the middle and right figures plot the corresponding distance and SNR values against the actual runtime in seconds.
The shaded areas representing the range of values attained across all test images.
We also plot the results of serial BC-RED using the dashed line as reference.
\proposed-BG is implemented to be run asynchronously on multiple cores, while BC-RED can only use one core to perform the computation.
The left figure highlights the fixed-point convergence of \proposed-BG in iteration for different $n_c$, with all variants agreeing with the serial BC-RED.
Since \proposed-BG uses more cores, the middle and right figures demonstrate the significantly faster in-time convergence of \proposed-BG than BC-RED to the same SNR value.
Specifically, BC-RED takes $1.8$ hours to achieve $29.00$ dB, while \proposed-BG ($n_c=8$) takes only $17.9$ minutes to obtain the same value, corresponding to a $6\times$ improvement in computation time.

Theorem~\ref{Theo:ConvergenceSG} establishes the convergence of \proposed-SG to $\zer(\Gsf)$ up to some error term, which is inversely proportional to the minibatch size $w$.
This is illustrated in Fig.~\ref{Fig:ConvergenceSG} (left) for three different minibatch sizes $w\in\{1120,2240,3360\}$.
As before, we plotted the average distance against the iteration number with the shading area representing the variance.
Note that the log-scale of y-axis highlights the change for smaller values.
Fig.~\ref{Fig:ConvergenceSG} demonstrates the improved convergence of \proposed-SG to $\zer(\Gsf)$ for larger $w$, which is consistent with our theoretical analysis.
Fig.~\ref{Fig:ConvergenceSG} (middle) compares the convergence speed between \proposed-BG/SG, gradient-method RED (\textsc{Gm-RED}), and synchronous parallel RED (\textsc{Sync-RED}).
For \proposed-SG, we use $w=1120$.
In particular, \proposed-SG takes fewer total runtime (from 17.9 min to 13.0 min) to obtain the similar result ($29.01$ dB and $28.03$ dB) and achieves $8.4\times$ speedup compared with $\textsc{Gm-RED}$. The table in  Fig.~\ref{Fig:ConvergenceSG} summarizes the detailed results.

\begin{figure}[t]
\centering\includegraphics[width=0.99\linewidth]{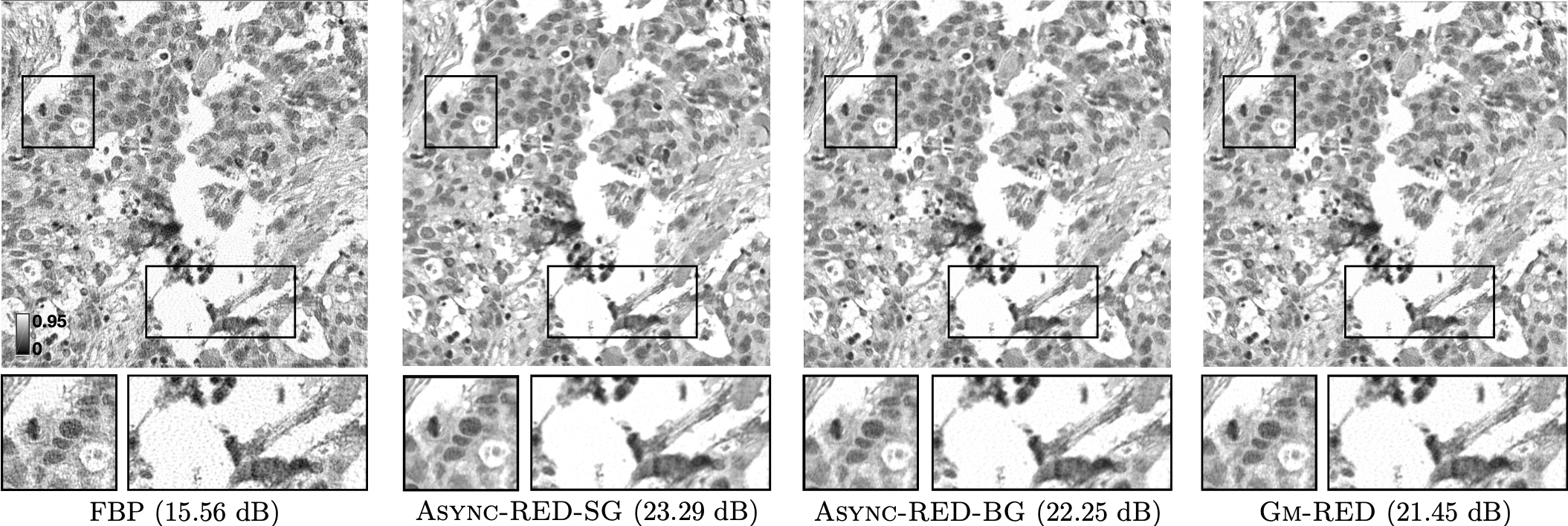}
\caption{CT reconstruction with a time budget of $1$ hour by \proposed-BG/SG and \textsc{Gm-RED}. The colormap is adjusted for the best visual quality.}
\label{Fig:VisualCT}
\end{figure}

\subsection{Effectiveness for Computational Imaging}
We additionally demonstrate the effectiveness of our algorithm~ by reconstructing a $800\times800$ CT image from its 180 projections.
For block parallel updates, the image is decomposed into $16$ blocks, each having the size of $200\times200$ pixels.
The Radon matrix used in the experiment corresponds to $180$ angles with $1131$ detectors, and the noise level is set to $70$ dB.
We refer to Supplement~\ref{Sec:ExtraValidations} for additional technical details.
Fig.~\ref{Fig:VisualCT} shows the visual illustration of the reconstructed images by \proposed-BG/SG and \textsc{Gm-RED}.
Each algorithm starts from the filtered back-projection (FBP) of the measurements and runs for $1$ hour.
Here, \proposed-SG randomly uses one-third of the total measurements at every iteration.
Given the same amount of time, \proposed-BG/SG successfully mitigates the noise-artifacts, while the result of \textsc{Gm-RED} is still noisy.
In particular, the per-iteration time cost of \proposed-BG/SG and \textsc{Gm-RED} is $5.23$, $3.21$, and $19.19$ seconds, respectively. This experiment clearly illustrates the fast processing speed of the asynchronous procedure.

\section{Conclusion}
\label{Sec:Conclusion}

Asynchronous parallel methods have gained increasing importance in optimization for solving large-scale imaging inverse problems.
We have introduced \proposed~as an extension of the recent RED framework and theoretically analyze its convergence in batch and stochastic settings.
We have validated its convergence guarantees and demonstrated its effectiveness in CT image reconstruction.
Future work will investigate theoretical limits of \proposed~in the unbounded maximal delay setting and explore its applicability to various inference problems in other data-intensive fields.

\bibliographystyle{IEEEtran}

\newpage
\appendix
\section*{Supplementary Material}

Our unified analysis of \proposed~is based on the monotone operator theory~\cite{Ryu.Boyd2016}.
In Supplement~\ref{Sup:Memory}, we first clarify our setting for the access of the shared memory.
In Supplement~\ref{Sup:Proof}, we present the proof of Theorem~\ref{Theo:ConvergenceBG} and Theorem~\ref{Theo:ConvergenceSG}, proving the fixed-point convergence of \proposed~to $\zer(\Gsf)$ in both batch and stochastic settings.
In Supplement~\ref{Sup:Review}, we provide a brief review of the related knowledge on monotone operators.
In Supplement~\ref{Sup:Experiments}, we include additional technical details and experiments omitted from the main paper due to space.

\section{Memory Access without Global Lock}
\label{Sup:Memory}

In the setting of \proposed, multiple cores may simultaneously read and update the blocks $\xbm_i$ in shared memory.
We coordinate the memory access of different cores by imposing certain \emph{local} locks.
For example, consider one work cycle of core $c_i$ for updating the block $\xbm_i$. First, a local \emph{read} lock is imposed to $\xbm_i$ such that only read operations (by $c_i$ or others) can be performed on $\xbm_i$.
If, at the same time, other cores want to write $\xbm_i$, then they have to wait until the read lock is released by the last one who finishes reading the block.
However, if they want to write other blocks, their operations will not be blocked.
Secondly, core $c_i$ evaluates the RED update on $\xbm_i$, while other cores continuously update $\xbm$.
Here, we assume that the number of updates by cores other than $c_i$ is bounded by some positive integer, which is exactly what Assumption~\ref{As:BoundedDelay} refers to.
After the evaluation finishes, core $c_i$ imposes a local \emph{write} lock, which prevents both read and write by other cores, on $\xbm_i$ and write the block with the computed update.
Similarly, other cores have to wait until the lock is released before operating on $\xbm_i$.
Finally, when the update finishes, the local lock will be released and core $c_i$ will restart a new cycle.
Note that $\xbm$ is never locked \emph{globally} during the full update cycle, and the reads of each block are always consistent.

In order to ensure the consistent read of $\xbm$, we leverage the dual-memory strategy for block coordinate settings proposed in \cite{Peng.etal2015} (see section 1.2.1 \emph{`Block coordinate'}).
Its key idea is that, before every write to a block $\xbm_i$, a copy of the old version of the block is kept for reading.
In this way, there always exists some state of $\xbm$ in the memory for the cores to access.

\section{Proof of Analysis}
\label{Sup:Proof}

In this section, we first present the proof of Theorem~\ref{Theo:ConvergenceBG}, then  followed by the proof of Theorem~\ref{Theo:ConvergenceSG}. For a review of monotone operators, we refer to Supplement~\ref{Sup:Review}.

Throughout the proof, we consider the probability space $(\Omega,\Fcal,P)$, where $\Omega$ denotes the sample space, $\Fcal$ the $\sigma$-algebra, and $P$ the probability measure. $\xbm^k$ is a random variable defined in $\R^n$. We use $\|\cdot\|$ to denote the $\ell_2$-norm.
We define the sequence of sub $\sigma$-algebra $\{\Xcal^{k}\}_{k\in\N}$ of $\Fcal$ as
$$\Xcal^{k}\defn\sigma(\xbm^0,...,\xbm^{k}, \Delta_0,...,\Delta_k),$$
where $\sigma$ generates the filtration (smallest $\sigma$-algebra) from $\xbm^0,...,\xbm^{k}$, and $\Delta_0,...,\Delta_k$. Note that the sequence $\{\Xcal^k\}_{k\in\N}$ is such that $\Xcal^{k} \subset \Xcal^{k+1}$ for any $k \in \N$. We use $\xbmast$ to denote some fixed point in the set $\zer(\Gsf)$.

\subsection{Proof of Theorem~\ref{Theo:ConvergenceBG}}

Our proof needs the following lemma on the RED operator.
\begin{lemma}
\label{Le:Cocoercivity}
Let Assumption~\ref{As:DataFitConvexity} and~\ref{As:NonexpansiveDen} hold for $g$ and $\Dsf_\sigma$. The composite operator $\Gsf$ is $1/(L+2\tau)$-cocoercive, that is
$$\left(\Gsf(\xbm)-\Gsf(\ybm)\right)^\Tsf\left(\xbm-\ybm\right)\geq\frac{1}{L+2\tau}\|\Gsf(\xbm)-\Gsf(\ybm)\|^2.$$
\begin{proof}
This lemma is adapted from Lemma 3 in~\cite{Sun.etal2019c}.
Consider the following decomposition
\begin{equation}
\Isf-\frac{2}{L + 2\tau}\Gsf=(\frac{2}{L + 2\tau} \cdot \frac{L}{2})\left[\Isf - \frac{2}{L} \nabla g \right]+(\frac{2}{L + 2\tau} \cdot \frac{2\tau}{2}) \left[\Isf - \frac{1}{\tau}\Hsf \right],
\end{equation}
where we recall $\Hsf=\tau(\Isf-\Dsf_\sigma)$. According to Assumption~\ref{As:DataFitConvexity}, $g$ is convex and $\nabla g$ is $L$-Lipschitz continuous. By Proposition~\ref{Prop:BlockCocoer} in Supplement~\ref{Sup:Review}, $\nabla g$ is $1/L$-cocoercive. 
Hence, by Proposition~\ref{Prop:NonexpEquiv} in Supplement~\ref{Sup:Review}, $\Isf - (2/L) \nabla g$ is nonexpansive.
Since $\Dsf_\sigma=\Isf - (1/\tau)\Hsf$, this means that $\Isf - (1/\tau)\Hsf$ is nonexpansive.
From Proposition~\ref{Prop:AveragedEquiv} in Supplement~\ref{Sup:Review}, we know that the convex combination of two nonexpansive operators is nonexpansive. Thus, $\Isf-(2/(L + 2\tau))\Gsf$ is nonexpansive, which also means that $\Gsf$ is $1/(L+2\tau)$-cocoercive according to Proposition~\ref{Prop:NonexpEquiv} in Supplement~\ref{Sup:Review}.
\end{proof}
\end{lemma}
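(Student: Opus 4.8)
The plan is to deduce the cocoercivity of $\Gsf$ from the nonexpansiveness of a suitably scaled operator. I would start from the classical equivalence in monotone operator theory: for any operator $\Tsf$ and constant $\beta>0$, $\Tsf$ is $\beta$-cocoercive if and only if $\Isf-2\beta\Tsf$ is nonexpansive (see, e.g., \cite{Bauschke.Combettes2017,Ryu.Boyd2016}). Taking $\beta=1/(L+2\tau)$, it then suffices to prove that $\Isf-\tfrac{2}{L+2\tau}\Gsf$ is nonexpansive.

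To establish that, I would exploit the additive split $\Gsf=\nabla g+\Hsf$ with $\Hsf\defn\tau(\Isf-\Dsf_\sigma)$ and write the scaled operator as a convex combination
\begin{equation*}
\Isf-\frac{2}{L+2\tau}\Gsf=\frac{L}{L+2\tau}\left[\Isf-\frac{2}{L}\nabla g\right]+\frac{2\tau}{L+2\tau}\left[\Isf-\frac{1}{\tau}\Hsf\right],
\end{equation*}
where the two coefficients are nonnegative and sum to one; this is the only place a short arithmetic check is needed. Next I would verify that each bracketed operator is nonexpansive. The second bracket is immediate, since $\Isf-\tfrac{1}{\tau}\Hsf=\Dsf_\sigma$, which is nonexpansive by Assumption~\ref{As:NonexpansiveDen}. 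For the first bracket, Assumption~\ref{As:DataFitConvexity} gives that $g$ is convex with $L$-Lipschitz gradient, so the Baillon--Haddad theorem yields that $\nabla g$ is $(1/L)$-cocoercive, and applying the same cocoercivity/nonexpansiveness equivalence (now with $\beta=1/L$) shows $\Isf-\tfrac{2}{L}\nabla g$ is nonexpansive. Since a convex combination of nonexpansive operators is nonexpansive (directly from the triangle inequality and convexity of $\|\cdot\|$), $\Isf-\tfrac{2}{L+2\tau}\Gsf$ is nonexpansive, and unwinding the equivalence gives that $\Gsf$ is $1/(L+2\tau)$-cocoercive.

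I do not expect a genuine obstacle here: the argument is essentially a restatement of Lemma~3 in \cite{Sun.etal2019c}. The only things to be careful about are (i) correctly tracking the scaling constants when invoking the cocoercivity/nonexpansiveness equivalence in its three appearances (for $\nabla g$ with $\beta=1/L$, for $\Dsf_\sigma$ via $\Hsf$, and for $\Gsf$ with $\beta=1/(L+2\tau)$) and (ii) confirming the convex-combination weights. The one conceptual point worth stating explicitly is that nonexpansiveness of $\Dsf_\sigma$ (Assumption~\ref{As:NonexpansiveDen}) is exactly what replaces the stronger structural hypotheses (local homogeneity, symmetric Jacobian) used elsewhere, so the proof must route through $\Isf-\tfrac{1}{\tau}\Hsf=\Dsf_\sigma$ and never rely on $\Hsf$ being a gradient.
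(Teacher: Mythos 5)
Your proposal is correct and follows essentially the same route as the paper's own proof: the equivalence between $\beta$-cocoercivity of $\Tsf$ and nonexpansiveness of $\Isf-2\beta\Tsf$, the identical convex-combination decomposition with weights $L/(L+2\tau)$ and $2\tau/(L+2\tau)$, Baillon--Haddad for $\nabla g$, and Assumption~\ref{As:NonexpansiveDen} for $\Isf-(1/\tau)\Hsf=\Dsf_\sigma$. No gaps; your closing remark that the argument never requires $\Hsf$ to be a gradient matches the paper's stated intent.
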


Now we can start the main proof.
Under the fixed stepsize $\gamma>0$, we begin with the following equations regarding the fixed point $\xbmast\in\zer(\Gsf)$
\begin{align}
\label{Eq:one}
&\E\left[\|\xbm^{k+1}-\xbmast\|^2|\Xcal^{k}\right] \nonumber\\
&=\E\left[\|\xbm^{k}-\gamma\Gsf_i(\xbmtilde^{k})-\xbmast\|^2|\Xcal^{k}\right] \nonumber\\
&=\E\left[\|\xbm^{k}-\xbmast\|^2|\Xcal^{k}\right]+\gamma^2\E\left[\|\Gsf_i(\xbmtilde^{k})\|^2|\Xcal^{k}\right]+2\gamma\E\left[(\Gsf_i(\xbmtilde^{k}))^\Tsf(\xbmast-\xbm^{k})|\Xcal^{k}\right]
\end{align}
Since $\Gsf_i:\R^n\rightarrow\R^n$ is evaluated on a random block of $\xbm_i$, we have the following conditional expectations
\begin{equation}
\label{Eq:ExpOfCross1}
\E\left[(\Gsf_i(\xbmtilde^{k}))^\Tsf(\xbmast-\xbm^{k})|\Xcal^{k}\right]=\frac{1}{b}\sum_{i=1}^{b}(\Gsf_i(\xbmtilde^{k}))^\Tsf(\xbmast-\xbm^{k})=\frac{1}{b}(\Gsf(\xbmtilde^{k}))^\Tsf(\xbmast-\xbm^{k})
\end{equation}
and
\begin{equation}
\label{Eq:ExpOfNorm1}
\E\left[\|\Gsf_i(\xbmtilde^{k})\|^2|\Xcal^{k}\right]=\frac{1}{b}\sum_{i=1}^{b}\|\Gsf_i(\xbmtilde^{k})\|^2=\frac{1}{b}\|\Gsf(\xbmtilde^{k})\|^2.
\end{equation}
Thus, plugging the above results into~\eqref{Eq:one}
\begin{equation}
\E\left[\|\xbm^{k+1}-\xbmast\|^2|\Xcal^{k}\right] \leq\|\xbm^{k}-\xbmast\|^2+\frac{\gamma^2}{b}\|\Gsf(\xbmtilde^k)\|^2+\underbrace{\frac{2\gamma}{b}(\Gsf(\xbmtilde^{k}))^\Tsf(\xbmast-\xbm^{k})}_{(\dag)}.
\end{equation}
The term $(\dag)$ can be expressed as
\begin{align}
\label{Eq:two}
&\frac{2\gamma}{b}(\Gsf(\xbmtilde^{k}))^\Tsf(\xbmast-\xbm^{k}) \nonumber\\
&=\frac{2\gamma}{b}(\Gsf(\xbmtilde^{k}))^\Tsf(\xbmast-\xbmtilde^{k} + \sum_{s = k-\Delta_{k}}^{k-1}(\xbm^s - \xbm^{s+1})) \nonumber\\
&=\frac{2\gamma}{b}(\Gsf(\xbmtilde^{k})-\Gsf(\xbmast))^\Tsf(\xbmast-\xbmtilde^{k})+\frac{2\gamma}{b}(\Gsf(\xbmtilde^{k}))^\Tsf(\sum_{s = k-\Delta_{k}}^{k-1}(\xbm^s - \xbm^{s+1}))\nonumber\\
&=\frac{2\gamma}{b}(\Gsf(\xbmtilde^{k})-\Gsf(\xbmast))^\Tsf(\xbmast-\xbmtilde^{k})+\frac{2\gamma^2}{b}\sum_{s = k-\Delta_{k}}^{k-1}\Gsf(\xbmtilde^{k})^\Tsf\Gsf_{i_s}(\xbmtilde^s),
\end{align}
where in the second line we used the definition of the stale iterate $\xbm^{s+1}=\xbm^s-\gamma\Gsf_{i_s}(\xbmtilde^k)$, and in the third line the fact that $\Gsf(\xbmast)=\mathbf{0}$. By using Lemma~\ref{Le:Cocoercivity}, we obtain the upper bound for the first term in equation~\eqref{Eq:two}
\begin{equation}
\label{Eq:twoA}
\frac{2\gamma}{b}(\Gsf(\xbmtilde^{k})-\Gsf(\xbmast))^\Tsf(\xbmast-\xbmtilde^{k}) \leq -\frac{2\gamma\|\Gsf(\xbmtilde^{k})\|^2}{b(L+2\tau)}.
\end{equation}
For the second term in~\eqref{Eq:two}, we have
\begin{align}
\label{Eq:twoB}
\frac{2\gamma^2}{b}\sum_{s = k-\Delta_{k}}^{k-1}\Gsf(\xbmtilde^{k})^\Tsf\Gsf_{i_s}(\xbmtilde^s)&\leq\frac{\lambda\gamma^2\|\Gsf(\xbmtilde^{k})\|^2}{b}+\sum_{s = k-\Delta_{k}}^{k-1}\frac{\gamma^2\|\Gsf_{i_s}(\xbmtilde^s)\|^2}{b}, \nonumber\\
&\leq\frac{\lambda\gamma^2\|\Gsf(\xbmtilde^{k})\|^2}{b}+\sum_{s = k-\lambda}^{k-1}\frac{\gamma^2\|\Gsf(\xbmtilde^s)\|^2}{b},
\end{align}
where in the first inequality we used the \emph{Young's inequality}
\begin{equation}
\label{Eq:Young}
\xbm_1^\Tsf\xbm_2\leq\frac{1}{2}\left[\|\xbm_1\|^2+\|\xbm_2\|^2\right],
\end{equation}
and in the second inequality we use
$$\sum_{s = k-\Delta k}^{k-1}\gamma^2\|\Gsf_{i_s}(\xbmtilde^{s})\|^2=\sum_{s = k-\Delta k}^{k-1}\|\xbm^s - \xbm^{s+1}\|_2^2\leq\sum_{s=k-\lambda}^{k-1}\|\xbm^s - \xbm^{s+1}\|_2^2=\sum_{s = k-\lambda}^{k-1}\gamma^2\|\Gsf(\xbmtilde^{s})\|^2.$$
Applying~\eqref{Eq:twoA} and~\eqref{Eq:twoB} in~\eqref{Eq:two} yields the overall upper bound for the term $(\dag)$
\begin{equation}
\label{Eq:three}
\frac{2\gamma}{b}(\Gsf(\xbmtilde^{k}))^\Tsf(\xbmast-\xbm^{k})\leq\frac{(L+2\tau)\lambda\gamma^2-2\gamma}{(L+2\tau)b}\|\Gsf(\xbmtilde^{k})\|^2+\sum_{s = k-\lambda}^{k-1}\frac{\gamma^2\|\Gsf(\xbmtilde^s)\|^2}{b}.
\end{equation}
Next, by plugging~\eqref{Eq:three} into~\eqref{Eq:one} and re-arranging the terms, we obtain the following inequality
\begin{align}
\label{Eq:FundamentalIneq1}
&\E\left[\|\xbm^{k+1}-\xbmast\|^2|\Xcal^{k}\right] \nonumber\\
&\leq\|\xbm^{k}-\xbmast\|^2+\sum_{s = k-\lambda}^{k-1}\frac{\gamma^2\|\Gsf(\xbmtilde^s)\|^2}{b}+\frac{(L+2\tau)(1+\lambda)\gamma^2-2\gamma}{(L+2\tau)b}\|\Gsf(\xbmtilde^{k})\|^2.
\end{align}
Taking the total expectation of equation~\eqref{Eq:FundamentalIneq1} and re-arranging the terms yields that
\begin{align}
\label{Eq:four}
&\frac{2\gamma-(L+2\tau)(1+\lambda)\gamma^2}{(L+2\tau)b}\E\left[\|\Gsf(\xbmtilde^{k})\|^2\right] \nonumber\\
&\leq\E\left[\|\xbm^{k}-\xbmast\|^2\right]-\E\left[\|\xbm^{k+1}-\xbmast\|^2\right]+\gamma^2\sum_{s = k-\lambda}^{k-1}\frac{\E\left[\|\Gsf(\xbmtilde^{s})\|^2\right]}{b}
\end{align}
We then telescope-sum equation~\eqref{Eq:four} over $t>0$ iterations to have
\begin{align}
\label{Eq:five}
&\sum_{k=0}^{t-1}\frac{2\gamma-(L+2\tau)(1+\lambda)\gamma^2}{(L+2\tau)b}\E\left[\|\Gsf(\xbmtilde^{k})\|^2\right] \nonumber\\
&\leq\E\left[\|\xbm^{0}-\xbmast\|^2\right]-\E\left[\|\xbm^{t}-\xbmast\|^2\right]+\gamma^2\sum_{k = 0}^{t-1}\sum_{s = k-\lambda}^{k-1}\frac{\E\left[\|\Gsf(\xbmtilde^{s})\|^2\right]}{b}
\end{align}
where the index $s$ always start at $0$. Under the assumption of consistent read, it is true that
\begin{equation}
\label{Eq:DoubleSum}
\sum_{k = 0}^{t-1}\sum_{s = k-\lambda}^{k-1}\frac{\E\left[\|\Gsf(\xbmtilde^{s})\|^2\right]}{b}\leq\lambda\sum_{k = 0}^{t-1}\frac{\E\left[\|\Gsf(\xbmtilde^{k})\|^2\right]}{b}.
\end{equation}
In the case of inconsistent read, the above inequality does not always hold. 
We refer to \cite{Peng.etal2015} for a comprehensive analysis for asynchronous block-coordinate methods with inconsistent reads.
Now, we rewrite equation~\eqref{Eq:five} as
\begin{align}
\label{Eq:six}
&\sum_{k=0}^{t-1}\frac{2\gamma-(L+2\tau)(1+2\lambda)\gamma^2}{(L+2\tau)b}\E\left[\|\Gsf(\xbmtilde^{k})\|^2\right]\leq\E\left[\|\xbm^{0}-\xbmast\|^2\right]-\E\left[\|\xbm^{t}-\xbmast\|^2\right].
\end{align}
In order to ensure the convergence, we need the coefficient of $\E\left[\|\Gsf(\xbmtilde^{k})\|^2\right]$ to be positive. From basic algebra, one feasible range for the stepsize $\gamma$ is
$$0<\gamma\leq\frac{1}{(L+2\tau)(1+2\lambda)},$$
which directly implies that
\begin{equation}
0<\frac{\gamma}{(L+2\tau)b}\leq\frac{2\gamma-(L+2\tau)(1+2\lambda)\gamma^2}{(L+2\tau)b}.\nonumber
\end{equation}
By simplifying~\eqref{Eq:six} with the above result and dropping the negative term, we can derive the following bound for the $\E\left[\|\Gsf(\xbmtilde^k)\|^2\right]$ averaged over $t$ iterations
\begin{equation}
\label{Eq:Delay1}
\frac{1}{t}\sum_{k=0}^{t-1}\E\left[\|\Gsf(\xbmtilde^{k})\|^2\right]\leq\frac{(L+2\tau)b}{\gamma t}\E\left[\|\xbm^{0}-\xbmast\|^2\right]\leq\frac{(L+2\tau)b}{\gamma t}R_0^2.
\end{equation}
The above inequality establishes that the change of the stale iterate $\xbmtilde^{k}$ converges to zero as $t$ increases.
Next, we will use the bound to establish the similar result for the actual iterate $\xbm^k$.
We know that $\|\Gsf(\xbm^k)\|^2$ can be bounded by
\begin{align}
\label{Eq:oone}
\|\Gsf(\xbm^k)\|^2 &\leq(\|\Gsf(\xbm^k)-\Gsf(\xbmtilde^k)\|+\|\Gsf(\xbmtilde^k)\|)^2 \nonumber\\
&= \|\Gsf(\xbm^k)-\Gsf(\xbmtilde^k)\|^2 + \|\Gsf(\xbmtilde^k)\|^2+2\|\Gsf(\xbm^k)-\Gsf(\xbmtilde^k)\|\|\Gsf(\xbmtilde^k)\|\nonumber\\
&\leq 2\|\Gsf(\xbm^k)-\Gsf(\xbmtilde^k)\|^2 + 2\|\Gsf(\xbmtilde^k)\|^2 \nonumber\\
&\leq2(L+2\tau)^2\|\xbm^k-\xbmtilde^k\|^2 + 2\|\Gsf(\xbmtilde^k)\|^2
\end{align}
where in the second inequality we used the Young's inequality~\eqref{Eq:Young}, and in the third inequality we used the following result implied by Lemma~\ref{Le:Cocoercivity}
$$(L+2\tau)\|\xbm-\ybm\|\geq\|\Gsf(\xbm)-\Gsf(\ybm)\|.$$
By expressing the stale iterate $\xbmtilde^k$, we can write equation~\eqref{Eq:oone} as
\begin{align}
\label{Eq:oonee}
\|\Gsf(\xbm^k)\|^2&\leq2(L+2\tau)^2\|\sum^{k-1}_{s=k-\lambda}\gamma\Gsf_{i_s}(\xbmtilde^s)\|^2+2\|\Gsf(\xbmtilde^k)\|^2. \nonumber\\
&\leq2\lambda(L+2\tau)^2\sum^{k-1}_{s=k-\lambda}\gamma^2\|\Gsf_{i_s}(\xbmtilde^s)\|^2+2\|\Gsf(\xbmtilde^k)\|^2.
\end{align}
where we use the fact
$$\|\sum_{i=1}^{n}\xbm_i\|^2=\sum_{i=1}^{n}\|\xbm_i\|^2+\sum_{a\neq b}\xbm_a^\Tsf\xbm_b\leq\sum_{i=1}^{n}\|\xbm_i\|^2+\frac{1}{2}\sum_{a\neq b}\left[\|\xbm_a\|^2+\|\xbm_b\|^2\right]=n\sum_{i=1}^{n}\|\xbm_i\|^2$$
Taking the expectation of equation~\eqref{Eq:oonee} leads to
\begin{align}
\label{Eq:ttwo}
&\E\left[\|\Gsf(\xbm^k)\|^2\right] \nonumber\\
&\leq2\lambda(L+2\tau)^2\sum^{k-1}_{s=k-\lambda}\gamma^2\E\left[\|\Gsf_{i_s}(\xbmtilde^s)\|^2\right]+2\E\left[\|\Gsf(\xbmtilde^k)\|^2\right] \nonumber\\
&\leq2\lambda(L+2\tau)^2\sum^{k-1}_{s=k-\lambda}\frac{\gamma^2\E\left[\|\Gsf(\xbmtilde^{s})\|^2\right]}{b}+2\E\left[\|\Gsf(\xbmtilde^k)\|^2\right],
\end{align}
By averaging~\eqref{Eq:ttwo} over $t>0$ iterations, we obtain that
\begin{align}
\label{Eq:tthree}
&\frac{1}{t}\sum^{t-1}_{k=0}\E\left[\|\Gsf(\xbm^k)\|^2\right] \nonumber\\
&\leq\frac{2\lambda(L+2\tau)^2}{t}\sum^{t-1}_{k=0}\sum^{k-1}_{s=k-\lambda}\frac{\gamma^2\E\left[\|\Gsf(\xbmtilde^{s})\|^2\right]}{b}+ \frac{2}{t}\sum^{t-1}_{k=0}\E\left[\|\Gsf(\xbmtilde^k)\|^2\right] \nonumber\\
&\leq\frac{2\lambda^2(L+2\tau)^2}{t}\sum^{t-1}_{k=0}\frac{\gamma^2\E\left[\|\Gsf(\xbmtilde^{k})\|^2\right]}{b}+\frac{2}{t}\sum^{t-1}_{k=0}\E\left[\|\Gsf(\xbmtilde^k)\|^2\right]
\end{align}
where we again used result in~\eqref{Eq:DoubleSum} in the last inequality.
Re-arranging the terms in~\eqref{Eq:tthree} yields
\begin{align}
\label{Eq:ffour}
\frac{1}{t}\sum^{t-1}_{k=0}\E\left[\|\Gsf(\xbm^k)\|^2\right]\leq\left[\frac{2\lambda^2(L+2\tau)^2}{b}\gamma^2+2\right]\frac{1}{t}\sum^{t-1}_{k=0}\E\left[\|\Gsf(\xbmtilde^k)\|^2\right]
\end{align}
We plug the result in~\eqref{Eq:Delay1} into \eqref{Eq:ffour} and obtain
\begin{align}
\label{Eq:ffive}
\frac{1}{t}\sum^{t-1}_{k=0}\E\left[\|\Gsf(\xbm^k)\|^2\right]
\leq\left[\frac{2\lambda^2(L+2\tau)^2}{b}\gamma^2+2\right]\frac{(L+2\tau)b}{\gamma t}R_0^2,
\end{align}
Since it is always true that 
$$\gamma\leq\frac{1}{(L+2\tau)(1+2\lambda)}\leq\frac{1}{(L+2\tau)(1+\lambda)}.$$
we can simplify the bound by using the above inequality related to the stepsize $\gamma$
\begin{align}
\label{Eq:SSix}
\frac{1}{t}\sum^{t-1}_{k=0}\E\left[\|\Gsf(\xbm^k)\|^2\right]
\leq\left[\frac{2\lambda^2}{(1+\lambda)^2b}+2\right]\frac{(L+2\tau)b}{\gamma t}R_0^2.
\end{align}
Let $D=2\lambda^2/(1+\lambda)^2$, and we derive the desired result.
\begin{equation}
\label{Eq:SSeven}
\min_{0\leq k\leq t-1}\E\left[\|\Gsf(\xbm^k)\|^2\right]\leq\frac{1}{t}\sum^{t-1}_{k=0}\E\left[\|\Gsf(\xbm^k)\|^2\right]\leq\left[\frac{D}{b}+2\right]\frac{(L+2\tau)b}{\gamma t}R_0^2.
\end{equation}

\subsection{Proof of Theorem~\ref{Theo:ConvergenceSG}}

We prove Theorem~\ref{Theo:ConvergenceSG} by following the procedure in the proof of Theorem~\ref{Theo:ConvergenceBG} with the adaptation to the block stochastic operator $\Gsfhat_i$.
In the key steps, we will highlight the difference between the two proofs.
In addition to Lemma~\ref{Le:Cocoercivity}, our second proof requires the following lemma related to the statistical properties of $\Gsfhat$.
\begin{lemma}
\label{Le:Variance}
Let Assumption~\ref{As:DataFitConvexity} and~\ref{As:NonexpansiveDen} hold for $g$ and $\Dsf_\sigma$. Then, we can establish the following statements for operator $\Gsfhat$
$$\E\left[\Gsfhat(\xbm)\right]=\Gsf(\xbm),\quad\E\left[\|\Gsfhat(\xbm)-\Gsf(\xbm)\|^2\right]\leq\frac{\nu^2}{w},$$
which further implies that
$$\E\left[\|\Gsfhat(\xbm)\|^2\right]\leq\frac{\nu^2}{w}+\|\Gsf(\xbm)\|^2.$$
\begin{proof}
Since the the stochasticity happens only in the evaluation of the gradient, it is straightforward to see that
\begin{equation}
\E\left[\Gsfhat(\xbm)\right] = \E[\widehat{\nabla} g(\xbm)] + \Dsf_\sigma(\xbm) = \Gsf(\xbm), \nonumber
\end{equation}
Similarly, we have that
$$\E\left[\|\Gsfhat(\xbm)-\Gsf(\xbm)\|_2^2\right]=\E\left[\|\widehat{\nabla} g(\xbm)-\nabla g(\xbm)\|_2^2\right]\leq\frac{\nu^2}{w}$$
Given that $\Tr(\E\left[X^\Tsf X\right])=\Tr(\Cov\left[X\right])+\Tr(\E\left[X\right]^2)$, we obtain that
\begin{equation}
\E\left[\|\Gsfhat(\xbm)\|^2\right]=\E\left[\|\Gsfhat(\xbm)-\Gsf(\xbm)\|^2\right]+\E\left[\Gsfhat(\xbm)\right]^2 \leq\frac{\nu^2}{w}+\|\Gsf(\xbm)\|^2, \nonumber
\end{equation}
where we let $\E\left[\Gsfhat(\xbm)\right]^2\defn\E\left[\Gsfhat(\xbm)\right]^\Tsf\E\left[\Gsfhat(\xbm)\right]$. Note that $\Tr(\cdot)$ and $\Cov(\cdot)$ denote the computation of the trace and covariance of a matrix and a vector, respectively.
\end{proof}
\end{lemma}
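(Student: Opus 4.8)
The plan is to exploit the structural fact that the only source of randomness in $\Gsfhat$ is the stochastic gradient $\widehat{\nabla} g$, since the denoiser contribution $\tau(\xbm - \Dsf_\sigma(\xbm))$ is deterministic. The crux of the whole argument is that subtracting $\Gsf$ from $\Gsfhat$ makes the deterministic denoiser terms cancel exactly, leaving $\Gsfhat(\xbm) - \Gsf(\xbm) = \widehat{\nabla} g(\xbm) - \nabla g(\xbm)$. Once this decomposition is made explicit, all three claims reduce directly to the statistical properties of the stochastic gradient supplied by Assumption~\ref{As:DataFitConvexity}(b).

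First I would verify unbiasedness. By linearity of expectation together with the cancellation above, $\E[\Gsfhat(\xbm)] = \E[\widehat{\nabla} g(\xbm)] + \tau(\xbm - \Dsf_\sigma(\xbm))$; invoking the unbiasedness of $\widehat{\nabla} g$ from Assumption~\ref{As:DataFitConvexity}(b), namely $\E[\widehat{\nabla} g(\xbm)] = \nabla g(\xbm)$, reassembles the right-hand side into exactly $\Gsf(\xbm)$. Second, for the variance bound I would use the same cancellation to write $\E[\|\Gsfhat(\xbm) - \Gsf(\xbm)\|^2] = \E[\|\widehat{\nabla} g(\xbm) - \nabla g(\xbm)\|^2]$, and then apply the bounded-variance hypothesis of Assumption~\ref{As:DataFitConvexity}(b) verbatim to obtain the upper bound $\nu^2/w$.

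Finally, the second-moment estimate follows from the standard bias--variance decomposition. For a random vector $X$ with mean $\mu = \E[X]$, expanding $\E[\|X\|^2] = \E[\|(X-\mu) + \mu\|^2]$ gives $\E[\|X\|^2] = \E[\|X-\mu\|^2] + \|\mu\|^2$, because the cross term $2\,\E[(X-\mu)^\Tsf]\mu$ vanishes by unbiasedness. Equivalently, this is the trace identity $\Tr(\E[X^\Tsf X]) = \Tr(\Cov[X]) + \Tr(\E[X]^\Tsf \E[X])$ used in the statement. Setting $X = \Gsfhat(\xbm)$ and $\mu = \Gsf(\xbm)$ and substituting the variance bound already established yields $\E[\|\Gsfhat(\xbm)\|^2] \leq \nu^2/w + \|\Gsf(\xbm)\|^2$.

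There is no genuine obstacle here; the argument is essentially bookkeeping. The only point that warrants care is making the cancellation of the denoiser term fully explicit—this is what converts properties of $\widehat{\nabla} g$ into properties of $\Gsfhat$—and confirming that the cross term in the bias--variance identity vanishes under unbiasedness. Both steps follow immediately from linearity once the defining decomposition $\Gsfhat = \widehat{\nabla} g + \tau(\Isf - \Dsf_\sigma)$ is written alongside $\Gsf = \nabla g + \tau(\Isf - \Dsf_\sigma)$.
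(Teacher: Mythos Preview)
Your proposal is correct and follows essentially the same route as the paper: both arguments reduce $\Gsfhat-\Gsf$ to $\widehat{\nabla} g-\nabla g$ by cancelling the deterministic denoiser term, then apply Assumption~\ref{As:DataFitConvexity}(b) directly, and conclude with the bias--variance (trace) identity. If anything, your write-up is slightly more careful in spelling out the cancellation and the vanishing cross term.
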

Now we start the proof. Similar as~\eqref{Eq:one}, we write that
\begin{align}
\label{Eq:One}
&\E\left[\|\xbm^{k+1}-\xbmast\|^2|\Xcal^{k}\right] \nonumber\\
&=\E\left[\|\xbm^{k}-\gamma\Gsfhat_i(\xbmtilde^{k})-\xbmast\|^2|\Xcal^{k}\right] \nonumber\\
&=\E\left[\|\xbm^{k}-\xbmast\|^2|\Xcal^{k}\right]+\gamma^2\E\left[\|\Gsfhat_i(\xbmtilde^{k})\|^2|\Xcal^{k}\right]+2\gamma\E\left[(\Gsfhat_i(\xbmtilde^{k}))^\Tsf(\xbmast-\xbm^{k})|\Xcal^{k}\right]
\end{align}
Here, the conditional expectation is taken for $\Gsfhat_i(\xbm)=\Usf_i\Usf_i^\Tsf\Gsfhat(\xbm)$. By using Lemma~\ref{Le:Variance}, we can compute conditional expectations as
\begin{equation}
\label{Eq:ExpOfCross2}
\E\left[(\Gsfhat_i(\xbmtilde^{k}))^\Tsf(\xbmast-\xbm^{k})|\Xcal^{k}\right]=\frac{1}{b}\E\left[(\Gsfhat(\xbmtilde^{k}))^\Tsf(\xbmast-\xbm^{k})|\Xcal^{k}\right]=\frac{1}{b}(\Gsf(\xbmtilde^{k}))^\Tsf(\xbmast-\xbm^{k})
\end{equation}
and
\begin{equation}
\label{Eq:ExpOfNorm2}
\E\left[\|\Gsfhat_i(\xbmtilde^{k})\|^2|\Xcal^{k}\right]=\frac{1}{b}\E\left[\|\Gsfhat(\xbmtilde^{k})\|^2|\Xcal^{k}\right] \leq\frac{\nu^2}{wb}+\frac{\|\Gsf(\xbmtilde^{k})\|^2}{b}.
\end{equation}
where we first compute the expectation corresponding to the randomized block and then the expectation for the stochastic measurements.
We note that the expectation of the cross term~\eqref{Eq:ExpOfCross2} remains the same as the result in~\eqref{Eq:ExpOfCross1}, while the expectation in~\eqref{Eq:ExpOfNorm2} has one extra term related to the norm variance of the stochastic operator compared with~\eqref{Eq:ExpOfNorm1}.
As we shall see in the future steps, the difference in the expectation of the operator's squared norm leads to the most modifications.
Using the above results in equation~\eqref{Eq:One} yields that
\begin{align}
&\E\left[\|\xbm^{k+1}-\xbmast\|^2|\Xcal^{k}\right] \nonumber\\
&\leq\|\xbm^{k}-\xbmast\|^2+\frac{\gamma^2}{b}\|\Gsf(\xbmtilde^k)\|^2+\frac{\gamma^2\nu^2}{wb} +\underbrace{\frac{2\gamma}{b}(\Gsf(\xbmtilde^{k}))^\Tsf(\xbmast-\xbm^{k})}_{(\dag)}.
\end{align}
By following~\eqref{Eq:two}, we can express the term $(\dag)$ as
\begin{align}
\label{Eq:Two}
&\frac{2\gamma}{b}(\Gsf(\xbmtilde^{k}))^\Tsf(\xbmast-\xbm^{k}) \nonumber\\
&=\frac{2\gamma}{b}(\Gsf(\xbmtilde^{k})-\Gsf(\xbmast))^\Tsf(\xbmast-\xbmtilde^{k})+\frac{2\gamma^2}{b}\sum_{s = k-\Delta_{k}}^{k-1}\Gsf(\xbmtilde^{k})^\Tsf\Gsfhat_{i_s}(\xbmtilde^s),
\end{align}
The upper bound of the first term is the same as shown in~\eqref{Eq:twoA}, which is
\begin{equation}
\label{Eq:TwoA}
\frac{2\gamma}{b}(\Gsf(\xbmtilde^{k})-\Gsf(\xbmast))^\Tsf(\xbmast-\xbmtilde^{k}) \leq -\frac{2\gamma\|\Gsf(\xbmtilde^{k})\|^2}{b(L+2\tau)}.
\end{equation}
Similarly, our second term is bounded by
\begin{align}
\label{Eq:TwoB}
\frac{2\gamma^2}{b}\sum_{s = k-\Delta_{k}}^{k-1}\Gsf(\xbmtilde^{k})^\Tsf\Gsfhat_{i_s}(\xbmtilde^s)
&\leq\frac{\lambda\gamma^2\|\Gsf(\xbmtilde^{k})\|^2}{b}+\sum_{s = k-\lambda}^{k-1}\frac{\gamma^2\|\Gsfhat(\xbmtilde^s)\|^2}{b},
\end{align}
where we used the Young's inequality~\eqref{Eq:Young} together with the fact that
\begin{equation}
\sum_{s = k-\Delta k}^{k-1}\|\Gsfhat_{i_s}(\xbmtilde^{k})\|^2\leq\sum_{s = k-\lambda}^{k-1}\|\Gsfhat_{i_s}(\xbmtilde^{k})\|^2\leq\sum_{s = k-\lambda}^{k-1}\|\Gsfhat(\xbmtilde^{k})\|^2. \nonumber
\end{equation}
Equation~\eqref{Eq:TwoA} and~\eqref{Eq:TwoB} together establish the overall upper bound for the term $(\dag)$
\begin{equation}
\label{Eq:Three}
\frac{2\gamma}{b}(\Gsf(\xbmtilde^{k}))^\Tsf(\xbmast-\xbm^{k})\leq\frac{(L+2\tau)\lambda\gamma^2-2\gamma}{(L+2\tau)b}\|\Gsf(\xbmtilde^{k})\|^2+\sum_{s = k-\lambda}^{k-1}\frac{\gamma^2\|\Gsfhat(\xbmtilde^s)\|^2}{b}.
\end{equation}
By plugging~\eqref{Eq:Three} into~\eqref{Eq:One} and re-arranging the terms, we obtain that
\begin{align}
\label{Eq:FundamentalIneq}
&\E\left[\|\xbm^{k+1}-\xbmast\|^2|\Xcal^{k}\right] \nonumber\\
&\leq\|\xbm^{k}-\xbmast\|^2+\frac{\gamma^2\nu^2}{wb}+\sum_{s = k-\lambda}^{k-1}\frac{\gamma^2\|\Gsfhat(\xbmtilde^s)\|^2}{b}+\frac{(L+2\tau)(1+\lambda)\gamma^2-2\gamma}{(L+2\tau)b}\|\Gsf(\xbmtilde^{k})\|^2.
\end{align}
Taking the total expectation of equation~\eqref{Eq:FundamentalIneq} and re-arranging the terms yields that
\begin{align}
\label{Eq:Four}
&\frac{2\gamma-(L+2\tau)(1+\lambda)\gamma^2}{(L+2\tau)b}\E\left[\|\Gsf(\xbmtilde^{k})\|^2\right] \nonumber\\
&\leq\E\left[\|\xbm^{k}-\xbmast\|^2\right]-\E\left[\|\xbm^{k+1}-\xbmast\|^2\right]+\frac{\gamma^2\nu^2}{wb}+\gamma^2\sum_{s = k-\lambda}^{k-1}\left[\frac{\nu^2}{wb}+\frac{\E\left[\|\Gsf(\xbmtilde^{s})\|^2\right]}{b}\right]
\end{align}
where we use the following inequality derived by using the law of total expectation and Lemma~\ref{Le:Variance}
\begin{align}
\label{Eq:TotalExp}
\E\left[\|\Gsfhat(\xbmtilde^{s})\|^2\right]&=\E\left[\E\left[\|\Gsfhat(\xbmtilde^{s})\|^2|\Xcal^{s}\right]\right]\leq\frac{\nu^2}{w}+\E\left[\|\Gsf(\xbmtilde^{s})\|^2\right].
\end{align}
We telescope-sum equation~\eqref{Eq:Four} over $t>0$ iterations to obtain
\begin{align}
\label{Eq:Five}
&\sum_{k=0}^{t-1}\frac{2\gamma-(L+2\tau)(1+\lambda)\gamma^2}{(L+2\tau)b}\E\left[\|\Gsf(\xbmtilde^{k})\|^2\right] \nonumber\\
&\leq\E\left[\|\xbm^{0}-\xbmast\|^2\right]-\E\left[\|\xbm^{t}-\xbmast\|^2\right]+\sum_{k = 0}^{t-1}\frac{\gamma^2\nu^2}{wb}+\gamma^2\sum_{k = 0}^{t-1}\sum_{s = k-\lambda}^{k-1}\left[\frac{\nu^2}{wb}+\frac{\E\left[\|\Gsf(\xbmtilde^{s})\|^2\right]}{b}\right]
\end{align}
By applying the same relaxation trick in~\eqref{Eq:DoubleSum} to~\eqref{Eq:Five}
\begin{align}
\label{Eq:DoubleSum2}
&\sum_{k = 0}^{t-1}\sum_{s = k-\lambda}^{k-1}\left[\frac{\nu^2}{wb}+\frac{\E\left[\|\Gsf(\xbmtilde^{s})\|^2\right]}{b}\right]\leq\lambda\sum_{k = 0}^{t-1}\left[\frac{\nu^2}{wb}+\frac{\E\left[\|\Gsf(\xbmtilde^{k})\|^2\right]}{b}\right],
\end{align}
we then have that
\begin{align}
\label{Eq:Six}
&\sum_{k=0}^{t-1}\frac{2\gamma-(L+2\tau)(1+2\lambda)\gamma^2}{(L+2\tau)b}\E\left[\|\Gsf(\xbmtilde^{k})\|^2\right]\leq\E\left[\|\xbm^{0}-\xbmast\|^2\right]+\frac{(1+\lambda)\gamma^2\nu^2}{wb}\cdot t,
\end{align}
where we dropped the negative term. Recall that if $\gamma$ is in the range $\gamma\in(0, 1/((L+2\tau)(1+2\lambda))]$, we have the inequality
\begin{equation}
\frac{\gamma}{(L+2\tau)b}\leq\frac{2\gamma-(L+2\tau)(1+2\lambda)\gamma^2}{(L+2\tau)b}.\nonumber
\end{equation}
By relaxing the coefficient in the lefthand side, dividing the inequality by $t$, and re-arranging the terms, we obtain the convergence in terms of the stale iterate $\xbmtilde^k$
\begin{align}
\label{Eq:Delay2}
\frac{1}{t}\sum_{k=0}^{t-1}\E\left[\|\Gsf(\xbmtilde^{k})\|^2\right]&\leq\frac{(L+2\tau)b}{\gamma t}\left[\E\left[\|\xbm^{0}-\xbmast\|^2\right]+\frac{(1+\lambda)\gamma^2\nu^2}{wb}\cdot t\right] \nonumber\\
&\leq\frac{(L+2\tau)b}{\gamma t}R_0^2+\frac{\gamma}{w}C
\end{align}
where we used Assumption~\ref{As:NonemptySet} and let $C = (L+2\tau)(1+\lambda)\nu^2$.
Compared with the result in equation~\eqref{Eq:Delay1},equation~\eqref{Eq:Delay2} has the extra term related to the variance of $\Gsfhat_i(\xbm)$. Next, we establish the convergence in terms of actual iterate $\xbm^k$. Following the steps from~\eqref{Eq:oone} to~\eqref{Eq:ttwo}, we directly obtain the inequality related to $\Gsfhat_i(\xbmtilde)$
\begin{equation}
\label{Eq:TTwo}
\E\left[\|\Gsf(\xbm^k)\|^2\right]\leq2\lambda(L+2\tau)^2\sum^{k-1}_{s=k-\lambda}\gamma^2\E\left[\|\Gsfhat_{i_s}(\xbmtilde^s)\|^2\right]+2\E\left[\|\Gsf(\xbmtilde^k)\|^2\right]
\end{equation}
By using the the result in~\eqref{Eq:TotalExp}, we derive from~\eqref{Eq:TTwo} that
\begin{equation}
\E\left[\|\Gsf(\xbm^k)\|^2\right]\leq2\lambda(L+2\tau)^2\sum^{k-1}_{s=k-\lambda}\gamma^2\left[\frac{\nu^2}{wb}+\frac{\E\left[\|\Gsf(\xbmtilde^{s})\|^2\right]}{b}\right]+2\E\left[\|\Gsf(\xbmtilde^k)\|^2\right].
\end{equation}
By averaging~\eqref{Eq:TTwo} over $t>0$ iterations, we obtain that
\begin{align}
\label{Eq:TThree}
&\frac{1}{t}\sum^{t-1}_{k=0}\E\left[\|\Gsf(\xbm^k)\|^2\right] \nonumber\\
&\leq\frac{2\lambda(L+2\tau)^2}{t}\sum^{t-1}_{k=0}\sum^{k-1}_{s=k-\lambda}\gamma^2\left[\frac{\nu^2}{wb}+\frac{\E\left[\|\Gsf(\xbmtilde^{s})\|^2\right]}{b}\right]+ \frac{2}{t}\sum^{t-1}_{k=0}\E\left[\|\Gsf(\xbmtilde^k)\|^2\right] \nonumber\\
&\leq\frac{2\lambda^2(L+2\tau)^2}{t}\sum^{t-1}_{k=0}\gamma^2\left[\frac{\nu^2}{wb}+\frac{\E\left[\|\Gsf(\xbmtilde^{k})\|^2\right]}{b}\right]  + \frac{2}{t}\sum^{t-1}_{k=0}\E\left[\|\Gsf(\xbmtilde^k)\|^2\right]
\end{align}
where we again used the relaxation~\eqref{Eq:DoubleSum2} in the last inequality.
Re-arranging the terms in~\eqref{Eq:TThree} yields
\begin{align}
\label{Eq:FFour}
&\frac{1}{t}\sum^{t-1}_{k=0}\E\left[\|\Gsf(\xbm^k)\|^2\right] \nonumber\\
&\leq\frac{2\lambda^2(L+2\tau)^2\cdot\nu^2}{wb}\gamma^2+\left[\frac{2\lambda^2(L+2\tau)^2}{b}\gamma^2+2\right]\frac{1}{t}\sum^{t-1}_{k=0}\E\left[\|\Gsf(\xbmtilde^k)\|^2\right]
\end{align}
We plug the result in~\eqref{Eq:Delay2} into \eqref{Eq:FFour} and obtain
\begin{align}
\label{Eq:FFive}
&\frac{1}{t}\sum^{t-1}_{k=0}\E\left[\|\Gsf(\xbm^k)\|^2\right] \nonumber\\
&\leq\frac{2\lambda^2(L+2\tau)^2\cdot\nu^2}{wb}\gamma^2+\left[\frac{2\lambda^2(L+2\tau)^2}{b}\gamma^2+2\right]\left[\frac{(L+2\tau)b}{\gamma t}R_0^2+\frac{\gamma}{w}C\right]
\end{align}
Similarly, we can use the fact
$$\gamma\leq\frac{1}{(L+2\tau)(1+\lambda)}.$$
to simplify the bound in~\eqref{Eq:FFive}
\begin{align}
\label{Eq:SSix}
&\frac{1}{t}\sum^{t-1}_{k=0}\E\left[\|\Gsf(\xbm^k)\|^2\right]\nonumber\\
&\leq\frac{2\lambda^2(L+2\tau)^2\cdot\nu^2}{wb}\cdot\frac{1}{(L+2\tau)(1+\lambda)} \cdot\gamma+\left[\frac{2\lambda^2}{(1+\lambda)^2b}+2\right]\left[\frac{(L+2\tau)b}{\gamma t}R_0^2+\frac{\gamma}{w}C\right]\nonumber\\
&=\frac{2\lambda^2}{(1+\lambda)^2b}\cdot\frac{(L+2\tau)(1+\lambda)\nu^2}{w}\cdot\gamma+\left[\frac{2\lambda^2}{(1+\lambda)^2b}+2\right]\left[\frac{(L+2\tau)b}{\gamma t}R_0^2+\frac{\gamma}{w}C\right]\nonumber\\
&=\frac{2\lambda^2}{(1+\lambda)^2b}\cdot\frac{C}{w}\gamma+\left[\frac{2\lambda^2}{(1+\lambda)^2b}+2\right]\left[\frac{(L+2\tau)b}{\gamma t}R_0^2+\frac{\gamma}{w}C\right]
\end{align}
where we recall $C = (L+2\tau)(1+\lambda)\nu^2$. Let $D=2\lambda^2/(1+\lambda)^2$ and we can derive the result of Theorem~\ref{Theo:ConvergenceSG}
\begin{equation}
\label{Eq:SSeven}
\min_{0\leq k\leq t-1}\E\left[\|\Gsf(\xbm^k)\|^2\right]\leq\frac{1}{t}\sum^{t-1}_{k=0}\E\left[\|\Gsf(\xbm^k)\|^2\right]\leq\left[\frac{D}{b}+2\right]\frac{(L+2\tau)b}{\gamma t}R_0^2+\left[\frac{2D}{b}+2\right]\frac{\gamma}{w}C,
\end{equation}
which immediately implies the result in remark 1 by setting $\gamma=1/\sqrt{wt}$
\begin{equation}
\min_{0\leq k\leq t-1}\E\left[\|\Gsf(\xbm^k)\|^2\right]\leq\frac{1}{t}\sum^{t-1}_{k=0}\E\left[\|\Gsf(\xbm^k)\|^2\right]\leq\left[\frac{D}{b}+2\right]\frac{(L+2\tau)b}{\sqrt{wt}}R_0^2+\left[\frac{2D}{b}+2\right]\frac{C}{\sqrt{wt}}.
\end{equation}
From basic algebra, we can derive the condition for $\lambda$
$$\frac{1}{\sqrt{wt}}\leq\frac{1}{(L+2\tau)(1+2\lambda)}\quad\Rightarrow\quad\lambda\leq\frac{1}{2}\left[\frac{\sqrt{wt}}{L+2\tau}-1\right].$$

\section{Background on Monotone Operators}
\label{Sup:Review}

The results in our review can be found in different forms in standard textbooks~\cite{Rockafellar.Wets1998, Boyd.Vandenberghe2004, Nesterov2004, Bauschke.Combettes2017}, and we include these results for completeness.

\medskip
\begin{definition}
An operator $\Tsf$ is Lipschitz continuous with constant $L > 0$ if
\begin{equation*}
\|\Tsf\xbm - \Tsf\ybm\| \leq L\|\xbm-\ybm\|,\quad \xbm, \ybm \in \R^n.
\end{equation*}
When $L = 1$, we say that $\Tsf$ is nonexpansive. When $L < 1$, we say that $\Tsf$ is a contraction.
\end{definition}

\begin{figure}[t]
\centering\includegraphics[width=0.7\linewidth]{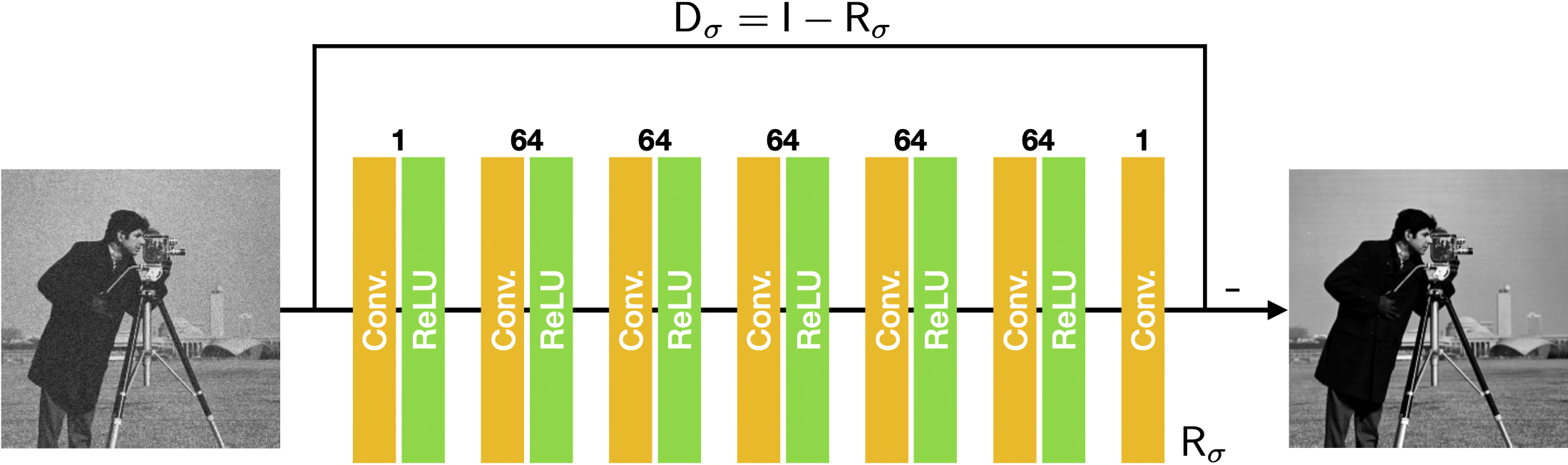}
\caption{Illustration of the architecture of DnCNN used in all experiments. The neural net is trained to remove the AWGN from its noisy input image. We also constrains the Lipschitz constant of $\Rsf_\sigma$ to be smaller than $2$ by using the spectral normalization technique in~\cite{Sedghi2018}. This provides a necessary condition for the satisfaction of Assumption~\ref{As:NonexpansiveDen}.}
\label{Fig:Architecture}
\end{figure}

\medskip
\begin{definition}
$\Tsf$ is monotone if
\begin{equation*}
(\Tsf(\xbm)-\Tsf(\ybm))^\Tsf(\xbm-\ybm) \geq 0,\quad \xbm, \ybm \in \R^n.
\end{equation*}
We say that it is strongly monotone or coercive with parameter $\mu > 0$ if
\begin{equation*}
(\Tsf(\xbm)-\Tsf(\ybm))^\Tsf(\xbm-\ybm) \geq \mu\|\xbm-\ybm\|^2,\quad \xbm, \ybm \in \R^n.
\end{equation*}
\end{definition}

\begin{definition}
$\Tsf$ is cocoercive with constant $\beta > 0$ if
\begin{equation*}
(\Tsf(\xbm)-\Tsf(\ybm))^\Tsf(\xbm-\ybm) \geq \beta\|\Tsf\xbm-\Tsf\ybm\|^2, \quad \xbm, \ybm \in \R^n.
\end{equation*}
When $\beta = 1$, we say that $\Tsf$ is firmly nonexpansive.
\end{definition}

\medskip\noindent
The following results are derived from the definition above.

\begin{proposition}
\label{Prop:BlockCocoer}
For a convex and continuously differentiable function $f$, we have
$$\nabla f \text{ is $L$-Lipschitz continuous} \quad\Leftrightarrow\quad \nabla f \text{ is  $(1/L)$-cocoercive}.$$
\end{proposition}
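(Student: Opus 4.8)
The plan is to prove the equivalence as two implications, with the reverse direction being elementary and the forward direction being the Baillon--Haddad theorem. For the ``$\Leftarrow$'' direction, suppose $\nabla f$ is $(1/L)$-cocoercive. For any $\xbm,\ybm\in\R^n$ with $\nabla f(\xbm)\neq\nabla f(\ybm)$, combining cocoercivity with the Cauchy--Schwarz inequality gives
$$\frac{1}{L}\|\nabla f(\xbm)-\nabla f(\ybm)\|^2 \leq (\nabla f(\xbm)-\nabla f(\ybm))^\Tsf(\xbm-\ybm) \leq \|\nabla f(\xbm)-\nabla f(\ybm)\|\,\|\xbm-\ybm\|,$$
and dividing through by $\|\nabla f(\xbm)-\nabla f(\ybm)\|$ yields $L$-Lipschitz continuity; the case $\nabla f(\xbm)=\nabla f(\ybm)$ is trivial. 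Note this direction does not even use convexity.

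For the harder ``$\Rightarrow$'' direction, I would fix $\ybm$ and introduce the auxiliary function $\phi(\zbm) \defn f(\zbm) - \nabla f(\ybm)^\Tsf\zbm$. Then $\phi$ is convex (since $f$ is), its gradient $\nabla\phi = \nabla f - \nabla f(\ybm)$ is still $L$-Lipschitz, and $\nabla\phi(\ybm)=0$, so $\ybm$ is a global minimizer of $\phi$. The next step is the descent lemma: from $L$-Lipschitzness of $\nabla\phi$ one gets the quadratic upper bound $\phi(\ubm) \leq \phi(\vbm) + \nabla\phi(\vbm)^\Tsf(\ubm-\vbm) + \tfrac{L}{2}\|\ubm-\vbm\|^2$; minimizing the right-hand side over $\ubm$ (optimal choice $\ubm = \vbm - \tfrac1L\nabla\phi(\vbm)$) gives $\phi(\ybm) \leq \phi(\vbm) - \tfrac{1}{2L}\|\nabla\phi(\vbm)\|^2$ for every $\vbm$. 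Taking $\vbm=\xbm$ and unfolding the definition of $\phi$, this rearranges to
$$f(\xbm) \geq f(\ybm) + \nabla f(\ybm)^\Tsf(\xbm-\ybm) + \frac{1}{2L}\|\nabla f(\xbm)-\nabla f(\ybm)\|^2,$$
a strengthening of the usual convexity inequality.

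To conclude, I would write this last inequality a second time with the roles of $\xbm$ and $\ybm$ swapped and add the two; the function values and the first-order terms recombine so that everything cancels except
$$(\nabla f(\xbm)-\nabla f(\ybm))^\Tsf(\xbm-\ybm) \geq \frac{1}{L}\|\nabla f(\xbm)-\nabla f(\ybm)\|^2,$$
which is exactly $(1/L)$-cocoercivity. The main obstacle is the descent-lemma step and its use: establishing the quadratic upper bound for $\phi$ from Lipschitz continuity of $\nabla\phi$ (a one-line estimate via the fundamental theorem of calculus along the segment $[\vbm,\ubm]$), and recognizing that evaluating it at the optimal $\ubm$ and comparing with the minimal value $\phi(\ybm)$ is what converts ``Lipschitz gradient'' into the cocoercive bound. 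Everything else is bookkeeping. Since this is a standard textbook fact, an acceptable alternative is simply to cite~\cite{Nesterov2004, Bauschke.Combettes2017} and omit the details.
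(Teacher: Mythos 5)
Your proof is correct and is essentially the same argument the paper relies on: the paper simply cites Theorem~2.1.5 of~\cite{Nesterov2004}, and your two implications (Cauchy--Schwarz for the easy direction; the auxiliary function $\phi$, the descent-lemma bound $\phi(\ybm)\leq\phi(\xbm)-\tfrac{1}{2L}\|\nabla\phi(\xbm)\|^2$, and symmetrization for the Baillon--Haddad direction) are exactly that standard proof written out in full. No gaps; your closing remark that citing~\cite{Nesterov2004} suffices is precisely what the paper does.
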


\begin{proof}
The proof is a minor variation of the one presented as Theorem~2.1.5 in Section~2.1 of~\cite{Nesterov2004}.
\end{proof}

%

\begin{proposition}
\label{Prop:NonexpEquiv}
Consider $\Tsf: \R^n \rightarrow \R^n$ and $\beta > 0$. Then, the following are equivalent
\begin{equation}
\Tsf \text{ is } \beta\text{-cocoercive} \quad\Leftrightarrow\quad \Isf-2\beta\Tsf \text{ is nonexpansive.} \nonumber
\end{equation}
\end{proposition}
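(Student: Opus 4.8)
The statement is an equivalence between two pointwise inequalities, and the natural approach is a single algebraic identity that converts one into the other. The plan is to write $\Ssf \defn \Isf - 2\beta\Tsf$, fix arbitrary $\xbm,\ybm\in\R^n$, and introduce the shorthand $\ubm \defn \xbm-\ybm$ and $\vbm \defn \Tsf(\xbm)-\Tsf(\ybm)$. Then $\Ssf\xbm - \Ssf\ybm = \ubm - 2\beta\vbm$, and expanding the squared $\ell_2$-norm gives
\begin{equation}
\|\Ssf\xbm - \Ssf\ybm\|^2 = \|\ubm\|^2 - 4\beta\,\ubm^\Tsf\vbm + 4\beta^2\|\vbm\|^2. \nonumber
\end{equation}

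The key observation is then just a rearrangement: the nonexpansiveness inequality $\|\Ssf\xbm - \Ssf\ybm\|^2 \leq \|\ubm\|^2$ is, after cancelling $\|\ubm\|^2$ from both sides and dividing by $4\beta>0$ (here is the only place the hypothesis $\beta>0$ enters, and it is what keeps the inequality direction intact), exactly equivalent to $\ubm^\Tsf\vbm \geq \beta\|\vbm\|^2$, i.e. to the $\beta$-cocoercivity inequality $(\Tsf(\xbm)-\Tsf(\ybm))^\Tsf(\xbm-\ybm) \geq \beta\|\Tsf\xbm-\Tsf\ybm\|^2$. Since every manipulation above is an equivalence valid for each fixed pair $(\xbm,\ybm)$, quantifying over all $\xbm,\ybm\in\R^n$ yields both implications at once: $\Tsf$ is $\beta$-cocoercive iff $\Isf-2\beta\Tsf$ is nonexpansive.

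There is essentially no obstacle here — the proof is a two-line computation. The only thing to be careful about is bookkeeping: stating clearly that the equivalence is established for an arbitrary but fixed pair of points before taking the universal quantifier, and noting explicitly that $\beta>0$ is used when dividing. If one wanted to present it even more symmetrically, one could also note $\Tsf = (2\beta)^{-1}(\Isf - \Ssf)$ and observe that the identity above is invariant under swapping the roles, but this is not needed. I would simply write out the displayed expansion, the one-line rearrangement, and conclude.
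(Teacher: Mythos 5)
Your proposal is correct and follows essentially the same route as the paper's proof: both rest on expanding the squared norm $\|(\Isf-2\beta\Tsf)\xbm-(\Isf-2\beta\Tsf)\ybm\|^2=\|\xbm-\ybm\|^2-4\beta(\Tsf\xbm-\Tsf\ybm)^\Tsf(\xbm-\ybm)+4\beta^2\|\Tsf\xbm-\Tsf\ybm\|^2$ and observing that, since $\beta>0$, the nonexpansiveness inequality rearranges exactly into the cocoercivity inequality. Your presentation as a single chain of equivalences is slightly more streamlined than the paper's, which proves the forward direction and notes the converse follows by reversing the steps, but the underlying computation is identical.
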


\begin{proof}
Let $\Rsf\defn\Isf-2\beta\Tsf$, then $\Tsf=1/(2\beta)(\Isf-\Rsf)$.
First suppose that $\Tsf$ is $\beta$-cocoercive. Let $\hbm \defn \xbm - \ybm$ for any $\xbm, \ybm \in \R^n$. We then have
\begin{equation*}
\beta\|\Tsf(\xbm)-\Tsf(\ybm)\|^2 \leq (\Tsf(\xbm)-\Tsf(\ybm))^\Tsf\hbm = \frac{1}{2\beta}\|\hbm\|^2 - \frac{1}{2\beta}(\Rsf(\xbm)-\Rsf(\ybm))^\Tsf\hbm.
\end{equation*}
We also have that
\begin{equation*}
\beta\|\Tsf(\xbm)-\Tsf(\ybm)\|^2 = \frac{1}{4\beta}\|\hbm\|^2 - \frac{1}{2\beta}(\Rsf(\xbm)-\Rsf(\ybm))^\Tsf\hbm + \frac{1}{4\beta}\|\Rsf(\xbm)-\Rsf(\ybm)\|^2.
\end{equation*}
By combining these two and simplifying the expression
\begin{equation*}
\|\Rsf(\xbm)-\Rsf(\ybm)\| \leq \|\hbm\|.
\end{equation*}
The converse can be proved by following this logic in reverse.

\end{proof}

The following characterization is also convenient.
\begin{proposition}
\label{Prop:AveragedEquiv}
For nonexpansive operators $\Tsf_1$ and $\Tsf_2$ with a constant $\alpha \in (0, 1)$, then the convex combination of the two operators $(1-\alpha)\Tsf_1 + \alpha\Tsf_2$ is nonexpansive.
\end{proposition}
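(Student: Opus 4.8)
\textbf{Proof proposal for Proposition~\ref{Prop:AveragedEquiv}.}
The plan is to reduce the claim to the triangle inequality for the $\ell_2$-norm together with the two hypotheses. Fix arbitrary $\xbm, \ybm \in \R^n$ and write $\Tsf \defn (1-\alpha)\Tsf_1 + \alpha\Tsf_2$. First I would expand the difference $\Tsf(\xbm) - \Tsf(\ybm)$ as the convex combination $(1-\alpha)\big(\Tsf_1(\xbm) - \Tsf_1(\ybm)\big) + \alpha\big(\Tsf_2(\xbm) - \Tsf_2(\ybm)\big)$, which follows directly from the definition of $\Tsf$ and linearity of vector subtraction.

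Next I would apply the triangle inequality to this convex combination to get
\begin{equation*}
\|\Tsf(\xbm) - \Tsf(\ybm)\| \leq (1-\alpha)\|\Tsf_1(\xbm) - \Tsf_1(\ybm)\| + \alpha\|\Tsf_2(\xbm) - \Tsf_2(\ybm)\|,
\end{equation*}
using that $1-\alpha > 0$ and $\alpha > 0$ so the coefficients pass through the norms unchanged. Then I would invoke the nonexpansiveness of $\Tsf_1$ and of $\Tsf_2$ to bound each of the two terms on the right by $\|\xbm - \ybm\|$, and finally use $(1-\alpha) + \alpha = 1$ to conclude $\|\Tsf(\xbm) - \Tsf(\ybm)\| \leq \|\xbm - \ybm\|$. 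Since $\xbm, \ybm$ were arbitrary, $\Tsf$ is nonexpansive.

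There is essentially no obstacle here: the argument is a one-line application of the triangle inequality. The only point worth a moment's care is that $\alpha \in (0,1)$ guarantees both weights are nonnegative (indeed positive), so the triangle-inequality step is legitimate — but the same conclusion holds verbatim for the closed interval $[0,1]$, so even this is not a genuine difficulty. No additional lemmas from the monotone operator toolbox are needed.
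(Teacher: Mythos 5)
Your proposal is correct and follows exactly the same route as the paper's proof: write the difference as a convex combination, apply the triangle inequality, invoke nonexpansiveness of $\Tsf_1$ and $\Tsf_2$, and use $(1-\alpha)+\alpha=1$. Nothing further is needed.
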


\begin{proof}
Let $\Tsf\defn(1-\alpha)\Tsf_1 + \alpha\Tsf_2$. For any $\xbm,\ybm\in\R^n$, we can write
$$\|\Tsf(\xbm)-\Tsf(\ybm)\|\leq(1-\alpha)\|\Tsf_1(\xbm)-\Tsf_1(\ybm)\|+\alpha\|\Tsf_2(\xbm)-\Tsf_2(\ybm)\|\leq\|\xbm-\ybm\|$$
\end{proof}

\section{Additional Technical Details}
\label{Sup:Experiments}

This section presents several technical details that were omitted from the main paper for space.
Section~\ref{Sec:ArchitectureTraining} presents the architecture and training of our DnCNN prior.
Section~\ref{Sec:ExtraValidations} provides extra details and validations that compliment the experiments in Section~\ref{Sec:Experiments} of the main paper.

\begin{figure}[t]
\centering\includegraphics[width=0.99\linewidth]{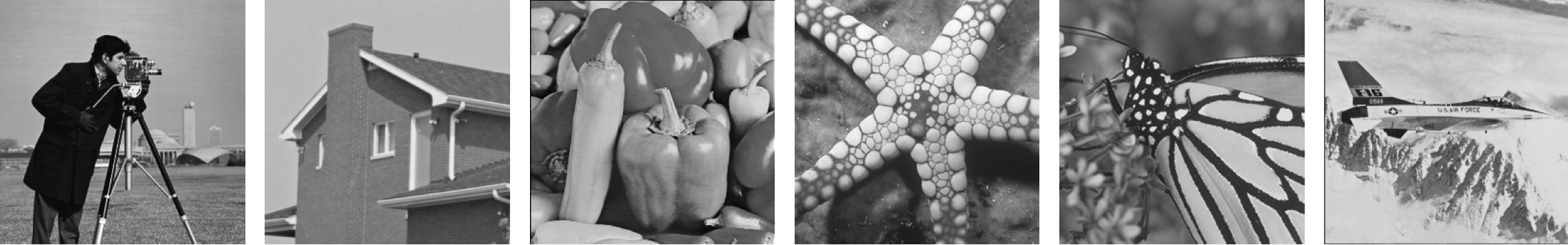}
\caption{Six test images used in the experiments on CS. From the left to right, there are \emph{cameraman}, \emph{house}, \emph{pepper}, \emph{starfish}, \emph{butterfly}, and \emph{jet}.}
\label{Fig:TestImages}
\end{figure}

\subsection{Architecture and Training of the DnCNN Prior}
\label{Sec:ArchitectureTraining}

Our denoiser follows the standard architecture of DnCNN~\cite{Zhang.etal2017}. Fig.~\ref{Fig:Architecture} visualizes the architectural details of the DnCNN prior used in our experiments.
Similar priors are extensively used in various PnP and RED algorithms~\cite{Zhang.etal2017a,Ryu.etal2019,Sun.etal2019c}.
In total, the network contains $7$ layers, of which the first $6$ layers consist of a convolutional layer and a rectified linear unit (ReLU), while the last layer contains only  a convolution operation.
A skip connection from the input to the output is used to enforce the residual network $\Rsf_\sigma$ to predict the noise residual.
The output images of the first $6$ layers have $64$ feature maps, while that of the last layer is a single-channel image.
We set all convolutional kernels to be $3\times3$ with stride $1$, which indicates that intermediate images have the same spatial size as the input image.
We generated $44700$ training examples by adding AWGN to $400$ images from the BSD400 dataset~\cite{Martin.etal2001} and extracting small patches of  $128 \times 128$ pixels with stride $30$.
Our DnCNN denoiser is trained to optimize the \emph{mean squared error} by using the Adam optimizer~\cite{Kingma.Ba2015}.

Different approaches have been used to constrain the Lipschitz constant (LC) of the denoising prior~\cite{Ryu.etal2019,Sun.etal2019c}.
We adopt the spectral normalization technique in~\cite{Sedghi2018} to control the LC of our DnCNN prior. In the training, we constrain the residual network $\Rsf_\sigma$ such that its LC is smaller than $2$.
Since the non-expansiveness of $\Dsf_\sigma$ implies that $\Rsf_\sigma$ has LC $\leq2$, this provides a \emph{necessary} condition for $\Dsf_\sigma$ to satisfy Assumption~\ref{As:NonexpansiveDen}~\cite{Sun.etal2019c}.

\subsection{Extra Details and Validations}
\label{Sec:ExtraValidations}

All experiments are run on the server equipped with 32 Intel(R) Xeon(R) CPU E5-2620 v4 processors of 3.2 GHz and 264 GBs of DDR memory. We trained all neural nets using NVIDIA RTX 2080 GPUs. We define the SNR (dB) used in the experiments as
\begin{equation}
\operatorname{SNR}(\hat{\xbm}, \xbm) \triangleq 20 \operatorname{log} _{10}\left(\frac{\|\xbm\|_{2}}{\|\xbm-\hat{\xbm}\|_{2}}\right)\nonumber
\end{equation}
where $\hat{\xbm}$ represents the reconstructed image and $\xbm$ denotes the ground truth.

Fig.~\ref{Fig:TestImages} shows the six test images used in the experiments of CS.
They are resized to the size of $240\times240$ pixels by using the Matlab function \texttt{imresize}.
As demonstrated in the middle figure in Fig.~\ref{Fig:ConvergenceSG}, \proposed-SG converges faster than \proposed-BG given a fixed amount of time.
This is further visualized in Fig.~\ref{Fig:VisualCS}, where each algorithm is run for roughly $700$ seconds.
Since \proposed-SG uses only one-fourth of the total measurements, the per-iteration complexity is lower than \proposed-BG, leading to the faster convergence speed.
In particular, the final SNR value obtained by \proposed-SG is roughly $2$ dB higher than \proposed-BG.
Additionally, both \proposed-BG/SG achieves significantly better results than \textsc{Sync-RED} and \textsc{Gm-RED} due to their adoption of asynchronous updates.

The test image used in the experiment of CT is selected from the dataset of human protein atlas~\cite{Williams.etal2017}.
We download $51$ images that have the size of $3000\times3000$ pixels.
We select one image for test, which is cropped to $800\times800$ pixels.
We extract $39000$ patches from the rest $50$ images to train five specific DnCNN denoisers for the removal of AWGN with $\sigma\in\{5,10,15,20,25\}$.
We report the result that has the highest SNR values.
The Radon matrix used in the experiments corresponds to $180$ angles with $1131$ detectors.
We synthesize the measurements by multiplying the Radom matrix with the vectorized image and add AWGN corresponding to $70$ dB input SNR.
In all tests, \proposed-SG randomly uses the measurements of $60$ angles at each iteration, while \proposed-BG uses the entire measurement set.
Fig.~\ref{Fig:CompareCT} provides a complete comparison between \proposed-BG/SG, \textsc{Sync-RED}, and \textsc{Gm-RED}.
As reference, we also include the proximal gradient method with total variation regularizer (PGM-TV).
The visual result of each method is obtained by running the algorithm with a time budget of $1$ hour.
Specifically, the per-iteration time cost of \proposed-BG/SG, \textsc{Sync-RED}, \textsc{Gm-RED}, \textsc{PGM-TV} are $5.23$, $3.21$, and $13.13$, $19.19$, and $44.74$ seconds, respectively.
The results clearly demonstrate that \proposed~are indeed effective and efficient for a realistic, nontrivial imaging task on a large-scale image.

\begin{figure}[t]
\centering\includegraphics[width=0.99\linewidth]{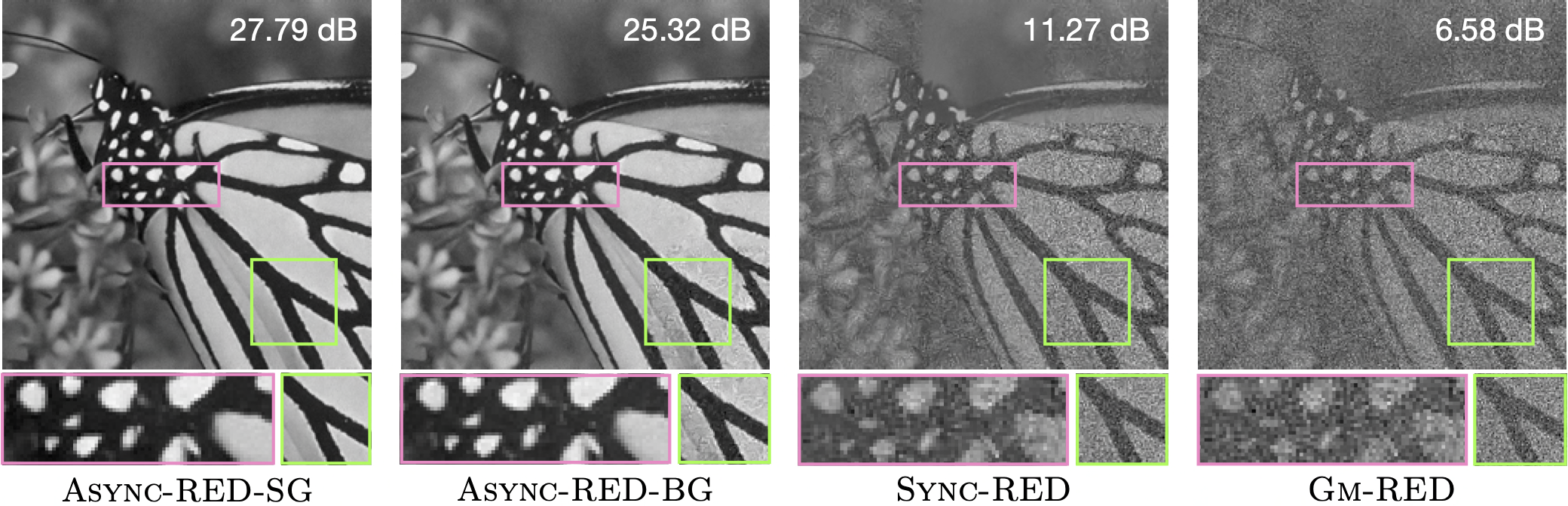}
\caption{Visualization of the recovered images from the compressed measurements by \proposed-BG/SG, \textsc{Sync-RED}, and \textsc{Gm-RED}. Each algorithm is run with a time budget of $700$ seconds.}
\label{Fig:VisualCS}
\end{figure}

\begin{figure}[t]
\centering\includegraphics[width=0.99\linewidth]{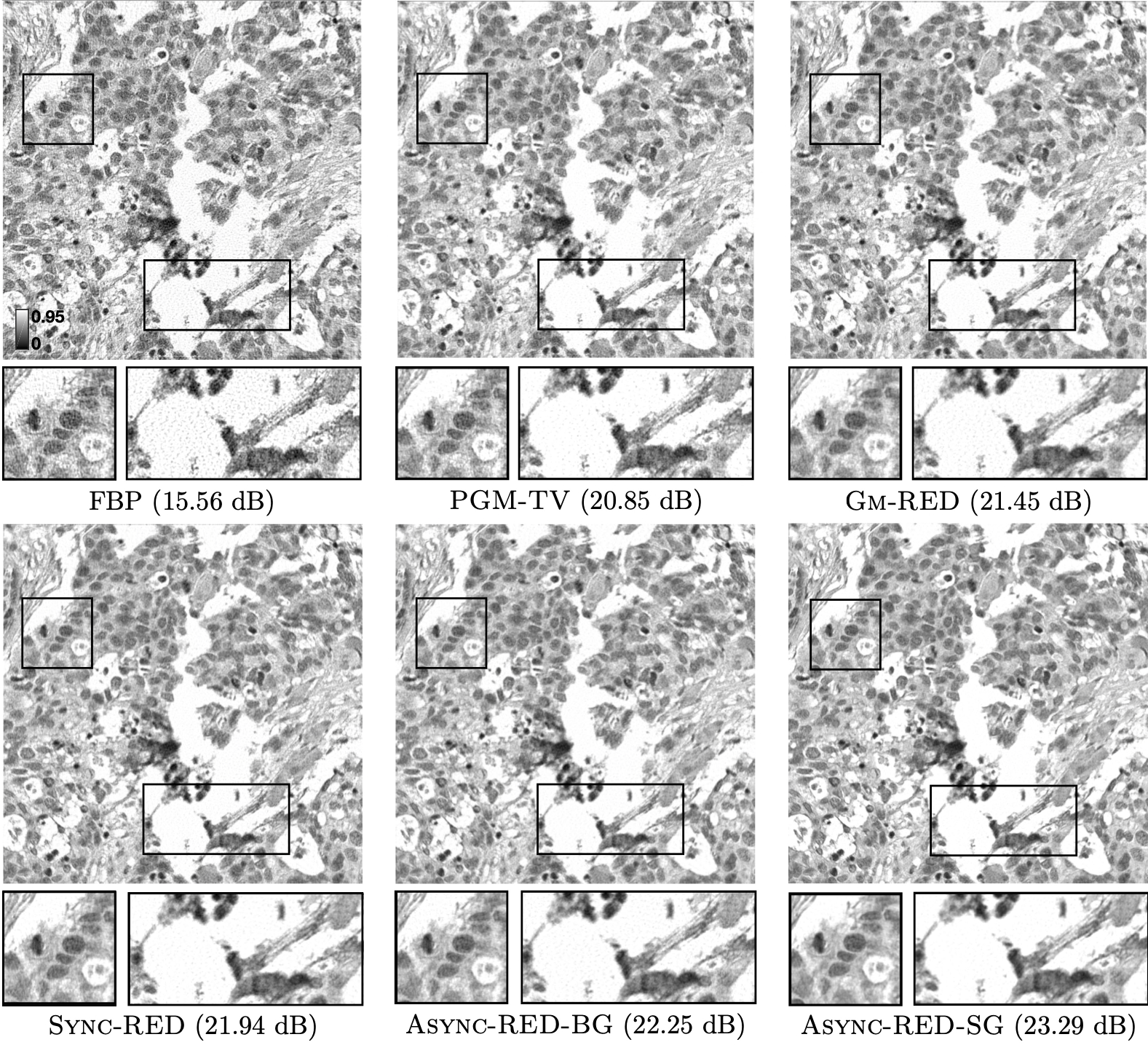}
\caption{Visualization of the reconstructed CT images by \textsc{PGM-TV}, \textsc{Gm-RED}, \textsc{Sync-RED}, and \proposed-BG/SG. Each algorithm is run with a time budget of $1$ hour. The colormap is adjusted for the best visual quality.}
\label{Fig:CompareCT}
\end{figure}

\end{document}